\author{Daniil Bargman\thanks{daniil.bargman.22@ucl.ac.uk}}
\affil{Institute of Finance and Technology, University College London}
\let\citep\citet
\setlist[itemize]{label=$\circ$}
\newtheorem{definition}{Definition}[section]
\newtheorem{proposition}{Proposition}[section]
\newcolumntype{L}{>{$}l<{$}}
\newcolumntype{C}{>{$}c<{$}}
\newcolumntype{P}[1]{>{\centering\arraybackslash}p{#1}}
\def\a{\mathbf{\bm{a}}}
\def\ta{\tilde{a}}
\def\y{\mathbf{\bm{y}}}
\def\ty{\tilde{y}}
\def\yh{\hat{\ty}}
\def\tmy{\tilde{\y}}
\def\myh{\hat{\tmy}}
\def\Y{\mathbf{\bm{Y}}}
\def\O{\mathbf{\bm{\Omega}}}
\def\k{\mathbf{\kappa}}
\def\x{\mathbf{\bm{x}}}
\def\P{\mathbf{\bm{L}}}
\def\B{\mathbf{\bm{B}}}
\def\vbdsum{\vb^{\oplus}}
\def\vbdsumt{\left( \vbdsum \right)'}
\def\tx{\tilde{x}}
\def\tX{\tilde{X}}
\def\A{\mathbf{\bm{A}}}
\def\Adsum{\A^{\oplus}}
\def\Adsumt{\left( \Adsum \right)'}
\def\adsum{\a^{\oplus}}
\def\adsumt{\left( \adsum \right)'}
\def\X{\mathbf{\bm{X}}}
\def\tX{\widetilde{X}}
\def\1{\mathbf{\bm{1}}}
\def\t1od{\1_{\o}^{\oplus}}
\def\0{\mathbf{\bm{0}}}
\def\F{F}
\def\Sm{\mathbf{\bm{\Sigma}}}
\def\M{\mathbf{\bm{M}}}
\def\V{\bm{\mathbf{\Theta}}}
\def\Smd{\Sm^{d}}
\def\g{\bm{\gamma}}
\def\gh{\bm{\hat{\gamma}}}
\def\ch{\hat{c}}
\def\w{\bm{\mathbf{w}}}
\def\wh{\bm{\mathbf{\hat{w}}}}
\def\o{\bm{\omega}}
\def\oh{\bm{\hat{\omega}}}
\def\od{\o^{\oplus}}
\def\odt{\left( \od \right)'}
\def\b{\bm{\beta}}
\def\bh{\bm{\hat{\beta}}}
\def\vb{\bm{b}}
\def\p{\bm{\phi}}
\def\phih{\hat{\phi}}
\def\ph{\bm{\hat{\phi}}}
\def\pw{\brac{\p \otimes \w}}
\def\pi{\brac{\p \otimes I_n}}
\def\piw{ \pi \w}
\def\wi{\brac{I_{V_a} \otimes \w}}
\def\wip{\wi \p}
\def\bo{\brac{\b \odot \o}}
\def\bi{\brac{\b \odot I_{\o}}}
\def\bio{\bi \o}
\def\oi{\brac{I_{\b} \odot \o}}
\def\oib{\oi \b}
\def\bko{\brac{\b \otimes \o}}
\def\bki{\brac{\b \otimes I_m}}
\def\oki{\brac{I_F \otimes \o}}
\def\uo{\brac{\u \odot \o}}
\def\ui{\brac{\u \odot I_{\o}}}
\def\oh{\bm{\hat{\omega}}}
\def\v{\vartheta}
\def\tv{\bm{\mathbf{v}}}
\def\s2{\sigma^2}
\def\ts2{\bm{\v}}
\def\l{\lambda}
\def\ls2{\l_{\s2}}
\def\ly{\l_{y}}
\def\r{\rho}
\def\rs2{\r_{\s2}}
\def\ry{\r_{y}}
\def\tl{\bm{\mathbf{\lambda}}}
\def\lp{\bm{l}_p}
\def\u{\bm{\mathbf{u}}}
\def\tg{\tilde{g}}
\def\tr{\tilde{r}}
\newcommand{\Lagr}{\mathcal{L}}
\newcommand{\brac}[1]{\left( {#1} \right)}
\newcommand{\seq}[1]{\left< {#1} \right>}
\newcommand*\mean[1]{\overline{#1}}
\DeclareMathOperator{\diag}{diag}
\date{\today}
\title{Latent Variable Modelling by Supervised Diffusion}
\begin{document}

\maketitle
\begin{abstract}

This paper proposes a new methodological framework for estimating
inferential models with latent variables. It also introduces a new
latent variable regression model called LARX: an extension of the
ubiquitous autoregressive model with exogenous inputs (ARX) in which any
or all input variables can be latent. In deriving the LARX model, a
minor contribution is also made to the field of matrix calculus: A new
matrix operator is defined and applied to solve a class of Lagrangian
optimisation problems with interactions between multiple coefficient
vectors subject to case-by-case constraints.

In the empirical section, the LARX model is used to re-examine the
relationship between stock market performance and real economic activity
in the United States. The LARX model attains an out-of-sample R-squared
of up to 79.7\% compared to 50.3\% for the baseline OLS approach. It also
reveals new information about the underlying drivers of the relationship
between stock returns and economic growth, including the predictive
power of sector rotations.

\end{abstract}
\section{Introduction}
\label{introduction}
A latent variable (LV) is an unobserved process which influences the
behaviour of an observed system. Many inferential models in economics
and finance use latent variables, which adds a layer of complexity to
their empirical estimation process. Prominent examples of such models
include the business cycle theory (\citep{Burns-Mitchell-1946}) and the
asset pricing theory (APT) (\citep{Ross-1976}), both of which are
characterised by a small number of unobserved processes driving a much
larger number of observed variables\footnote{The business cycle theory states that the unobserved state of the
business cycle drives a large number of observed macroeconomic
aggregates such as growth, inflation and unemployment. In APT, long-term
asset returns are determined by the respective assets' exposures to
unobserved risk premia.}.

One of the most popular methodologies for approximating latent variables
in both descriptive and inferential settings is principal component
analysis (PCA) attributed to \citep{Pearson-1901} and
\citep{Hotelling-1933}. PCA uses a group of observed variables to
construct a sequence of latent variables, known as Principal Components
(PC) or Diffusion Indices (DI)\footnote{Principal components are referred to as ``diffusion indices'' in
\citep{Stock-2002}, which is one of the first papers to demonstrate the
superior forecasting power of latent macroeconomic factors approximated
by principal components with respect to a wide range of US economic
aggregates. The term is not to be confused with the colloquial meaning
of a diffusion index as a summary statistic describing the share of
positive and negative values in a set of survey responses or
macroeconomic indicators, such as \href{https://www.ismworld.org/supply-management-news-and-reports/reports/ism-pmi-reports/}{the suite of US business activity
indices produced by the Institute of Supply Management}. This paper
adopts the terminology of \citep{Stock-2002} and uses the term
``diffusion index'', or DI, to refer to a linear combination of observed
variables constructed with the goal of approximating a latent variable
or factor.}, each capturing one independent
direction of variance. DIs have useful statistical properties, including
their ability to reduce the dimensionality of the input dataset and to
eliminate multicollinearity in the final regression problem. However,
PCA also has two major drawbacks: interpretability, and the restrictive
assumptions required to ensure the optimality of DI-based factor models
in inferential settings (e.g., see \citep{Bai-2006}).

This paper builds on the conceptual foundations of PCA and its
multivariate cousin, Canonical Correlation Analysis (CCA)
(\citep{Hotelling-1936}), to propose a new methodological framework for
estimating inferential models with latent variables. The new framework
is given the name `'Supervised Diffusion'' (SDF), following the
terminology of \citep{Stock-2002} which refers to principal components
as diffusion indices. SDF capitalises on the idea of latent variables as
linear combinations of observed variables. However, contrary to PCA, the
linear combination weights are optimised for inferential purposes.

The SDF formula makes it possible to transform any traditional
regression methodology into a latent variable model. As a practical
example, the ubiquitous autoregressive model with exogenous inputs (ARX)
is used to derive a new regression model called LARX: a superset of the
ARX model in which any or all variables can be latent. Like ARX, the
LARX model presents with several interesting special cases, including a
parsimonious lead-lag regression model referred to as ``Latent Shock
Regression'' (LSR), as well as a new canonical decomposition technique
best described as Canonical Autocorrelation Analysis (CAA).

In the broader ecosystem of latent variable research, SDF and the LARX
model contribute to a body of work extending beyond economics and
finance, where models based on CCA and its close relative Partial Least
Squares (PLS) (\citep{Wold-1982,Wold-1975}) are studied under the term
Latent Variable Regression (LVR). Over the past 90 years, LVR models
have been developed and applied in fields like industrial chemistry
(e.g., \citep{Burnham-MacGregor-1996,Burnham-McGregor-1999}), machine
learning (e.g., \citep{Wang-2020,Dai-2020,Vaerenberg-2018,Chi-2013}), and medicine, among other quantitative disciplines
(\citep{Uurtio-2017} provides a good overview). Traditional use cases
for these models include dimensionality reduction, identification of the
directions of correlation in multivariate data streams (e.g.,
\citep{Burnham-MacGregor-1996,Dong-2018,Qin-2021}), and multi-label
classification of images, videos, audio, and hand-written text (e.g.,
\citep{Wang-2020,Dai-2020,Vaerenberg-2018,Chi-2013}).

The rest of this paper is organised as follows. Section
\ref{blockwise_direct_sum} defines a new matrix operator which is
subsequently used to derive the LARX model. Section \ref{sdf} introduces the
supervised diffusion framework and examines its relationship to PCA,
least-squares regression and portfolio optimisation. Section \ref{larx}
derives the LARX methodology, making a minor contribution to the field
of matrix calculus. Section \ref{larx_special_cases} examines a few special
cases of the LARX model, including CCA, LSR and CAA. Section
\ref{empirical_results} applies the LARX model in an empirical setting by
examining the predictive power of the US stock market with respect to
real US economic activity. Section \ref{concluding_remarks} concludes and
discusses possible avenues for future work.
\section{Block-wise Direct Sum Operator}
\label{blockwise_direct_sum}
Some of the mathematical derivations in this paper involve block-wise
operations between matrices and vectors with blocks along one dimension.
These derivations become much easier with the help of a shorthand
notation which we can call a block-wise direct sum \(\Adsum\) for an
arbitrary matrix or vector \(\A\). This section briefly introduces the
block-wise direct sum operator and its relevant properties.

Let \(\seq{\A_i | 1 \leq i \leq k}\) be a sequence of \(k\) real
matrices with arbitrary dimensions. If all \(\A_i\) have the same number
of columns, they can be concatenated vertically into a matrix with row
blocks. If all \(\A_i\) have the same number of rows, they can be
concatenated horizontally into a matrix with column blocks. Formally:

\begin{flalign*}
& \seq{\A} \equiv \seq{\A_i} := \seq{\A_i | 1 \leq i \leq k}
  & \text{ a sequence of matrices of arbitrary dimensions} \\[10pt]
& \A \equiv \A_v := \left[ \seq{\A_i} \right]_v 
  & \text{ a vertical concatenation of } \seq{\A_i} \text{ (default)} \\[10pt]
& \A_h := \left[ \seq{\A_i} \right]_h 
  & \text{ a horizontal concatenation of } \seq{\A_i}
\end{flalign*}

The block-wise direct sum operator \(\Adsum\) is then defined as:

\begin{flalign*}
& \Adsum \equiv \seq{\A}^{\oplus} \equiv \seq{\A_i}^{\oplus}
         := \A_1 \oplus \A_2 \oplus \A_3 \oplus \cdots \oplus \A_k
    = \begin{pmatrix}
        \A_1, & \0, & \cdots & \0 \\
        \0, & \A_2, & \cdots & \0 \\
        \vdots & \vdots & \ddots & \vdots \\
        \0, & \0, & \cdots & \A_k
      \end{pmatrix} &
\end{flalign*}

Compatibility of dimensions is not a constraint for a direct sum between
matrices, so a block-wise direct sum is defined for any sequence of
matrices \(\seq{\A_i}\). The result is always a matrix whose number of
rows (columns) is the sum total number of rows (columns) across all
comprising \(\A_i\).

The matrix \(\Adsum\) can itself be mapped to either a sequence of \(k\)
row blocks or a sequence of \(k\) column blocks without slicing through
the original matrices in the sequence. For most use cases the block
structure of \(\Adsum\) will not be relevant. For the remaining
scenarios let us apply the same shorthand notation as above:

\begin{flalign*}
& \Adsum \equiv \Adsum_v
    := \brac{ \begin{array}{cccc}
        \A_1, & \0, & \cdots & \0 \\ \hline
        \0, & \A_2, & \cdots & \0 \\ \hline
        \vdots & \vdots & \ddots & \vdots \\ \hline
        \0, & \0, & \cdots & \A_k
      \end{array} }, \quad 
\Adsum_h
    := \brac{ \begin{array}{c|c|c|c}
        \A_1 & \0 & \cdots & \0 \\
        \0 & \A_2 & \cdots & \0 \\
        \vdots & \vdots & \ddots & \vdots \\
        \0 & \0 & \cdots & \A_k \\
      \end{array} } &
\end{flalign*}

One of the main use cases for the block-wise direct sum operator lies in
defining block-wise inner products and block-wise quadratic forms between
matrices and vectors. For example, let \(\B\) be a matrix with the same
number of rows and the same block structure as \(\A\). The product
\(\Adsumt \B\) then yields:

\begin{flalign*}
& \Adsumt \B
    = \begin{pmatrix}
        \A_1' & \0 & \cdots & \0 \\
        \0 & \A_2' & \cdots & \0 \\
        \vdots & \vdots & \ddots & \vdots \\
        \0 & \0 & \cdots & \A_k'
      \end{pmatrix} \left( \begin{array}{c}
    \B_1 \\ \hline
    \B_2 \\ \hline
    \vdots \\ \hline
    \B_k
 \end{array} \right)
    = \left( \begin{array}{c}
    \A_1' \B_1 \\ \hline
    \A_2' \B_2 \\ \hline
    \vdots \\ \hline
    \A_k' \B_k
 \end{array} \right) &
\end{flalign*}

In our case, this operation will be particularly useful for solving
Lagrangian optimisation problems in the presence of piecemeal
constraints on the coefficient vector, e.g., when one portion of the
coefficient vector must have a unit length constraint, another a
zero-sum constraint, and so on. Several properties of the block-wise
direct sum operator will be relevant in this context. First of all, we
note that the transpose of \(\Adsum\) is the same as the block-wise
direct sum of \(\A'\):

\begin{proposition} \label{prop:bds_vs_transpose} The function
composition of the block-wise direct sum operator and the transpose
operator is commutative. In other words, \( \Adsumt = \brac{ \A'
}^{\oplus}\).
\end{proposition}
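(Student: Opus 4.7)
The plan is to establish the identity $\Adsumt = (\A')^{\oplus}$ by direct block-by-block computation. Since $\Adsum$ has a completely explicit structure --- $\A_i$ in the $(i,i)$ position and zero blocks elsewhere --- transposition can be carried out blockwise and compared term by term with the stated right-hand side.

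The steps I would carry out in order are as follows. First, I would write $\Adsum$ in its explicit block-diagonal form using the definition given just before the proposition. Second, I would invoke the standard rule that the $(i,j)$ block of the transpose of a block matrix is the transpose of its $(j,i)$ block. Applying this rule to $\Adsum$ yields $\A_i'$ on the diagonal (from transposing the $(i,i)$ block $\A_i$) and $\0' = \0$ everywhere off-diagonal. Third, I would write out $(\A')^{\oplus}$ by applying the definition of the blockwise direct sum to the sequence $\seq{\A_i'}$ and observe that the two matrices have identical block structure: $\A_i'$ on the diagonal and zero blocks elsewhere.

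I do not expect any genuine obstacle in this proof; the only subtlety worth pausing over is the bookkeeping for the dimensions of the off-diagonal zero blocks. If $\A_i$ has dimensions $p_i \times q_i$, then the zero block in position $(j,i)$ of $\Adsum$ has dimensions $p_j \times q_i$, and its transpose in position $(i,j)$ of $\Adsumt$ has dimensions $q_i \times p_j$. This matches exactly the off-diagonal zero blocks of $(\A')^{\oplus}$, since $\A_i'$ has dimensions $q_i \times p_i$, so the $(i,j)$ zero block in $(\A')^{\oplus}$ must have dimensions $q_i \times p_j$ to be compatible with the diagonal blocks $\A_i'$ and $\A_j'$. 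Once the dimension accounting checks out, the identity reduces to the trivial equality of two matrices whose corresponding blocks agree entry by entry.
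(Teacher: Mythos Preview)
Your proposal is correct and follows essentially the same approach as the paper: both arguments amount to writing out the block-diagonal structure of $\Adsum$, transposing it blockwise to obtain $\A_i'$ on the diagonal and zeros elsewhere, and recognising the result as $\A_1' \oplus \cdots \oplus \A_k' = (\A')^{\oplus}$. Your explicit dimension check on the off-diagonal zero blocks is a nice addition that the paper's one-line proof leaves implicit.
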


\begin{proof}
\begin{flalign*}
& {\Adsumt} = \begin{pmatrix}
        \A_1' & \0 & \cdots & \0 \\
        \0 & \A_2' & \cdots & \0 \\
        \vdots & \vdots & \ddots & \vdots \\
        \0 & \0 & \cdots & \A_k'
      \end{pmatrix}
    = \A_1' \oplus \A_2' \oplus \A_3' \oplus \cdots \oplus \A_k'
    = \brac{\A'}^{\oplus} &
\end{flalign*} \end{proof}

Second, the block-wise direct sum of a vector can be written as a
block-wise Kronecker product as defined in \citep{Khatri-Rao-1968},
which we denote by ``\(\odot\)'':

\begin{proposition} \label{prop:bds_vs_odot} Let \(\a\) comprise \(k\)
blocks given by the sequence of vectors \(\seq{\a_i | 1 \leq i \leq
k}\). The matrix \(\adsum\) can be expressed as a block-wise Kronecker
product between \(\a\) and an identity matrix \(I_k\) with vector
blocks along the same dimension as \(\a\).
\end{proposition}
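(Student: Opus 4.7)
The plan is to verify the stated identity by matching the two sides row-block-by-row-block after pinning down the partition structure. First I would fix notation: write $\a$ as a column vector partitioned into $k$ row blocks $\a_i$ of lengths $n_i$, so that by the definition already given $\adsum$ is the block-diagonal matrix of size $\brac{\sum_i n_i} \times k$ whose $i$-th diagonal block is the column vector $\a_i$.

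Next I would unpack the phrase ``an identity matrix $I_k$ with vector blocks along the same dimension as $\a$'': since $\a$ is partitioned vertically into $k$ row blocks, the matching partition of $I_k$ splits $I_k$ into $k$ row blocks, each of which is the single row $\e_i'$ (the $i$-th row of $I_k$). With matching row-block structures on $\a$ and on $I_k$, the Khatri--Rao blockwise Kronecker product $\a \odot I_k$ is well defined, and by its definition the $i$-th row block of $\a \odot I_k$ is the Kronecker product $\a_i \otimes \e_i'$.

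The remaining calculation is elementary: for a column vector $\a_i \in \mathbb{R}^{n_i}$ against a row vector $\e_i' \in \mathbb{R}^{1 \times k}$, the Kronecker product collapses to the outer product $\a_i \e_i'$, an $n_i \times k$ matrix whose only nonzero column is the $i$-th and equals $\a_i$ itself. Stacking these $k$ row blocks vertically reproduces exactly the block-diagonal matrix described in the definition of $\adsum$, which completes the argument.

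I do not expect any genuine obstacle here; the proof is essentially bookkeeping about how the two partitions line up. The one point that requires care is verifying that the intended partition of $I_k$ is indeed into \emph{row} blocks (to match the vertical partition of $\a$) rather than into column blocks, since a column-block partition would route the Kronecker product into an entirely different arrangement and would not recover $\adsum$.
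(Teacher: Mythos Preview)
Your proposal is correct and follows essentially the same approach as the paper: partition $I_k$ into its $k$ rows $\e_i'$, observe that the $i$-th row block of $\a \odot I_k$ is $\a_i \otimes \e_i'$, and note that stacking these blocks reproduces the block-diagonal matrix $\adsum$. The paper's proof is the same computation displayed in one line, with the additional remark that the row-vector case follows by symmetry.
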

\begin{proof}For a column vector we have:
\begin{flalign*}
& \adsum = \begin{pmatrix}
        \a_1 & \0 & \cdots & \0 \\
        \0 & \a_2 & \cdots & \0 \\
        \vdots & \vdots & \ddots & \vdots \\
        \0 & \0 & \cdots & \a_k
      \end{pmatrix} = \brac{ \begin{array}{c}
        \a_1 \otimes \begin{bmatrix}1, & 0, & \cdots & 0 \end{bmatrix} \\ \hline
        \a_2 \otimes \begin{bmatrix}0, & 1, & \cdots & 0 \end{bmatrix} \\ \hline
        \vdots \\ \hline
        \a_k \otimes \begin{bmatrix}0, & 0, & \cdots & 1 \end{bmatrix} 
      \end{array} } = \brac{ \begin{array}{c}
        \a_1 \\ \hline
        \a_2 \\ \hline
        \vdots \\ \hline
        \a_k
      \end{array} } \odot \brac{ \begin{array}{cccc}
        1, & 0, & \cdots & 0 \\ \hline
        0, & 1, & \cdots & 0 \\ \hline
        \vdots & \vdots & \ddots & \vdots \\ \hline
        0, & 0, & \cdots & 1
      \end{array} } = \a \odot I_k &
\end{flalign*}
The proof for a row vector follows by symmetry. \end{proof}

Third, for two column vectors \(\a\) and \(\vb\) with the same length
and row block structure, the operation \(\adsumt \vb\) is symmetric:

\begin{proposition} \label{prop:bds_vs_vectors}
For column vectors \(\a\) and \(\vb\) with identically sized row blocks
\(\a_1,\a_2,\a_3,\ldots,\a_k\) and \(\vb_1,\vb_2,\vb_3,\ldots,\vb_k\),
respectively, \(\adsumt \vb = \vbdsumt \a \).
\end{proposition}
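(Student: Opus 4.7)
The plan is to reduce the proposition to the symmetry of the scalar inner product, using the blockwise product formula already computed in the excerpt.

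First, I would instantiate the identity
\[
\Adsumt \B
    = \begin{pmatrix}
        \A_1' \B_1 \\
        \A_2' \B_2 \\
        \vdots \\
        \A_k' \B_k
      \end{pmatrix}
\]
displayed earlier in this section with $\A = \a$ and $\B = \vb$. Because the block sizes of $\a$ and $\vb$ match by hypothesis, this yields
\[
\adsumt \vb = \begin{pmatrix} \a_1' \vb_1 \\ \a_2' \vb_2 \\ \vdots \\ \a_k' \vb_k \end{pmatrix}.
\]
Each entry $\a_i' \vb_i$ is a scalar, since $\a_i$ and $\vb_i$ are column vectors of equal length.

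Next, I would apply the same formula with the roles of $\a$ and $\vb$ swapped, obtaining
\[
\vbdsumt \a = \begin{pmatrix} \vb_1' \a_1 \\ \vb_2' \a_2 \\ \vdots \\ \vb_k' \a_k \end{pmatrix}.
\]
Comparing the two expressions block by block, each pair $\a_i' \vb_i$ and $\vb_i' \a_i$ is a scalar inner product between two vectors of the same length, so they are equal. Hence the two column vectors coincide.

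I do not anticipate a real obstacle: the work reduces to the commutativity of the Euclidean inner product on each block. The only point worth being slightly careful about is that Proposition \ref{prop:bds_vs_transpose} is implicitly used in reading $\adsumt$ as $(\a')^{\oplus}$ so that the displayed $\Adsumt \B$ formula applies cleanly to vector blocks; once that identification is made, the conclusion is immediate.
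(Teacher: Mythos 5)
Your proof is correct and follows essentially the same route as the paper's: both compute the blockwise product $\adsumt \vb$ as the stacked scalars $\a_i' \vb_i$ and conclude by the symmetry of the scalar inner product on each block. The only cosmetic difference is that you invoke the general $\Adsumt \B$ display rather than writing out the block-diagonal multiplication explicitly, which changes nothing of substance.
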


\begin{proof}
\begin{flalign*}
& \adsumt \vb = \begin{pmatrix}
        \a_1' & \0 & \cdots & \0 \\
        \0 & \a_2' & \cdots & \0 \\
        \vdots & \vdots & \ddots & \vdots \\
        \0 & \0 & \cdots & \a_k'
      \end{pmatrix} \left( \begin{array}{c}
    \vb_1 \\ \hline
    \vb_2 \\ \hline
    \vdots \\ \hline
    \vb_k
 \end{array} \right)
  = \left( \begin{array}{c}
    \a_1' \vb_1 \\ \hline
    \a_2' \vb_2 \\ \hline
    \vdots \\ \hline
    \a_k' \vb_k
 \end{array} \right)
  = \left( \begin{array}{c}
    \vb_1' \a_1 \\ \hline
    \vb_2' \a_2 \\ \hline
    \vdots \\ \hline
    \vb_k' \a_k
 \end{array} \right) = \vbdsumt \a &
\end{flalign*} \end{proof}

Lastly, the block-wise direct sum operator is commutative with respect to
a certain class of operations over matrix sequences. Specifically, for a
sequence of matrices \(\seq{\A_i | 1 \leq i \leq k}\) and an operation
\(f\) over matrix sequences of length \(k\), it can be shown that \(f
\brac{ \seq{\A} }^{\oplus} = \left[ f \brac{ \seq{\Adsum} } \right]\)
if \(f\) satisfies certain conditions. This, in turn, can be used to
prove that for two vectors \(\a\) and \(\vb\) with \(k\) row blocks
each, the block-wise Kronecker product \(\a \odot \vb\) can be factorised
in the same way as the traditional Kronecker product, namely:

\begin{equation*}
\a \odot \vb = \brac{\a \odot I_{\vb}} \vb = \brac{ I_{\a} \odot \vb } \a
\end{equation*}

where \(I_{\a}\) and \(I_{\vb}\) are identity matrices with the same row
block structure as \(\a\) and \(\vb\), respectively. The corresponding
derivations are deferred to \ref{bds_vs_mmul} and \ref{block_kron_factorisation}.
\section{A Conceptual Case for a Supervised Diffusion Framework}
\label{sdf}
An inferential model with latent variables (from now, simply a ``latent
variable model'', or LVM for short) has two simultaneous objectives: to
estimate a functional relationship, and to approximate the latent
variables in it. Several types of latent variable models are widely used
in economics and finance\footnote{Two popular latent variable models in modern econometrics are
instrumental variable regression (e.g., \citep{Stock-2003-IV}) and
hidden Markov models (\citep{Baum-1966}). In the former, the causal
relationship between the explanatory and the dependent is obscured by an
unobserved common factor. In the latter, the variables and their
relationships are influenced by unobserved changes in regime such as the
stage of the business cycle or a bull/bear market.}, but this paper is concerned with a
special class of LVMs in which unobserved processes are approximated as
linear combinations of observed variables. Principal Component Analysis
(PCA) (\citep{Pearson-1901,Hotelling-1933}) is arguably the best-known
model of this kind.

This section proposes a new methodological framework for estimating
latent variable models. It is referred to as a ``Supervised Diffusion
Framework'' (SDF) following the terminology of \citep{Stock-2002} which
uses the term ``Diffusion Index'' (DI) to describe a composite indicator
obtained by PCA. Conceptually, SDF brings PCA into an inferential
setting: If the goal of PCA is to describe the latent sources of
variation in a single set of variables, the goal of SDF is to describe
the latent source(s) of functional dependency between two or more sets.
In this context, SDF also generalises the method of Canonical
Correlation Analysis (CCA) (\citep{Hotelling-1936}) whose goal is to
identify the latent sources of pairwise correlation -- a special type of
functional relationship (more on this in Section \ref{cca}).

Let us start with a couple of formal definitions:

\begin{definition}\label{def:lv}

A latent variable \(\ty\) is an unobserved process which can be
approximated as a linear combination over a vector of observed random
variables \(Y = \begin{bmatrix} y_1, & y_2, & y_3, & \ldots, & y_n
\end{bmatrix}\) with measurement error \(r_y\), such that \(\ty = Y\w +
r_y\) where \(\w\) has dimensions \(n \times 1\) and can't be a multiple
of a standard basis vector.

\end{definition}

\begin{definition}\label{def:lvm}

A latent variable model (LVM) is a regression model in which at least
one variable is latent.

\end{definition}

Now let us consider a generalised formula for a standard multiple
regression between a dependent variable \(y\) and a sequence of \(\k\)
explanatory variables \(\seq{x_j} := \seq{x_1,x_2,x_3,\ldots,x_{\k}} =
\seq{x_j}_{j=1}^{\k}\):

\begin{equation}\label{eqn:reg}
y = \F_{\g}(x_1,x_2,x_3,\ldots,x_{\k}) + e \equiv \F_{\g}(\seq{x_j}) + e
\end{equation}

Here, \(\g\) denotes a vector of regression parameters, \(e\) is a
mean-zero error term, and \(F_{\g}: \mathbb{R} \longrightarrow
\mathbb{R}\) is generally assumed to be at least once differentiable
with respect to \(\g\). The standard least-squares solution to (\ref{eqn:reg})
is obtained by finding the vector \(\gh\) which minimises the variance
of \(e\) using some sample observation vectors \(\y\) and \(\x_j\)
containing the data for \(y\) and \(x_j\), respectively.

In a slight abuse of notation, let \(\F(\seq{\a_j})\) represent a
row-wise operation over the sequence \(\seq{\a_j}\) when \(\a_j\) are
sample vectors (or matrices). In other words, given a sequence of sample
vectors \(\seq{\a_j}_{j = 1}^{\k}\) where \(\a_j = \begin{bmatrix}
a_{1,j}, & a_{2,j}, & a_{3,j}, & \ldots, & a_{s,j} \end{bmatrix}'\) for
all \(j\) and \(s\) is the sample size, let:

\begin{equation*} 
\F(\seq{\a_j}) = \begin{bmatrix}
\F(\seq{a_{1,j}}_{j = 1}^{\k}), & \F(\seq{a_{2,j}}_{j = 1}^{\k}), &
\F(\seq{a_{3,j}}_{j = 1}^{\k}), & \ldots, & \F(\seq{a_{s,j}}_{j = 1}^{\k})
\end{bmatrix}'
\end{equation*}

We can then write the least squares optimisation problem for (\ref{eqn:reg})
in sample form as:

\begin{equation}\label{eqn:reg_solution}
\gh = \underset{\g}{\mathrm{argmin}} \| \y - \F_{\g}(\seq{\x_j}) \|^2_2
\end{equation}

Now assume WLOG that all the variables in this model are latent. The
problem specified by (\ref{eqn:reg}) then becomes:

\begin{equation}\label{eqn:reg_lv_a}
\ty = \F_{\g}(\seq{\tx_j}) + e
\end{equation}

According to Definition \(\ref{def:lv}\), we can approximate \(\ty\) and
all \(\tx_j\) using linear combinations of observed variables. Define
\(Y = \begin{bmatrix} y_1, & y_2, & y_3, & \ldots, & y_n \end{bmatrix}\)
and \(X_j = \begin{bmatrix} x_{1,j}, & x_{2,j}, & x_{3,j}, & \ldots, &
x_{m_j,j} \end{bmatrix}\) such that \(Y \w = \ty + r_y\) and \(X_j \o_j
= \tx_j + r_j\) for some unknown weight vectors \(\w\) and \(\o_j\) with
dimensions \(n \times 1\) and \(m_j \times 1\), respectively. Plugging
these expressions into (\ref{eqn:reg_lv_a}) and collecting the error terms
produces a regression problem of the form:

\begin{equation}\label{eqn:reg_lv_b}
\begin{split}
& Y \w = \F_{\g}(\seq{X_j \o_j}) + \epsilon \\
& \text{where } \epsilon = e + r_y + \F_{\g}(\seq{\tx_j}) - \F_{\g}(\seq{X_j \o_j})
\end{split}
\end{equation}

Here, \(\epsilon\) collects all the possible sources of regression error
in (\ref{eqn:reg_lv_a}): \(e\) is the residual noise from the functional
relationship \(\F_{\g}: \seq{\tx_j} \mapsto \ty\) under the true
regression parameters, \(r_y\) is the measurement noise from
approximating \(\ty\) based on \(Y\), and \(\F_{\g}(\seq{\tx_j}) -
\F_{\g}(\seq{X_j \o_j})\) summarises the reduction in the quality of
regression fit caused by the measurement noise in all \(\tx_j\).

When we compare the traditional regression problem (\ref{eqn:reg}) to the
latent variable regression problem (\ref{eqn:reg_lv_b}), two important
differences jump out. First, to the extent that the statistical
properties of any regression model are defined by the properties of its
error term, (\ref{eqn:reg_lv_b}) looks different from (\ref{eqn:reg}) because of all
the additional components in \(\epsilon\). By extension, the
optimisation techniques which are appropriate for (\ref{eqn:reg}) may not
always be applicable to (\ref{eqn:reg_lv_b}). However, a good counter-example
to this argument is provided by any traditional regression model in
which the sample measurements of the observed variables \(y\) and
\(x_j\) contain measurement uncertainty. In these cases one could just
as well write out \(y\) and \(x_j\) as \(y = \ty + r_y\) and \(x_j =
\tx_j + r_j\) for some ``true'' intrinsic processes \(\ty\) and
\(\tx_j\) with the measurement noise removed\footnote{An argument can be made that this counter-example represents
many research settings prevalent in finance and economics. For example,
asset prices are susceptible to noise trading, while macroeconomic
aggregates will generally have non-negligible measurement error.}. Plugging these
expressions back into (\ref{eqn:reg}) would produce the same error term as in
(\ref{eqn:reg_lv_b})\footnote{Under this assumption both (\ref{eqn:reg}) and (\ref{eqn:reg_lv_a}) can also
be interpreted as errors-in-variables (EIV) models
(\citep{Griliches-1970}). A thorough discussion of the statistical
properties of EIV models is beyond the scope of this paper, but the
reader is referred to \citep{Wolter-1982} as a starting point.}. This allows us to draw a strong methodological
parallel between traditional regression models and LVMs with limited
loss of generality. Formally speaking, for any intrinsic random
processes \(\ty\) and \(\tx_j\) governed by a hypothesised functional
relationship \(\F_{\g}\), any viable approach to estimating \(\F_{\g}\)
as a traditional regression model should also be applicable in a latent
variable setting as long as all the available sample measurements for
\(\ty\) and \(\tx_j\) are prone to measurement uncertainty.

The second difference between (\ref{eqn:reg}) and (\ref{eqn:reg_lv_b}) stems from
the extra parameters required for estimating (\ref{eqn:reg_lv_b}) (recall that
the coefficient vectors \(\w\) and \(\o_j\) are all initially unknown).
One could say that the asset pricing theory (APT) has largely succeeded
in estimating \(\o_j\) using a combination of economically motivated
heuristics\footnote{Arguably the most famous example of a heuristic approach to the
construction of asset pricing factors is \citep{Fama-French-1993} which
recombines a large cross-section of single stocks into 25 value-weighted
portfolios based on valuation and market capitalisation, and then uses
those portfolios as both the dependent variables and the building blocks
for the Fama-French three-factor model. A similar technique was
subsequently adopted in other prominent papers, such as
\citep{Asness-2013} which constructs 48 portfolios across asset classes
ranked on valuation and price momentum (the individual portfolios are
value-weighted in the case of single stocks and equal-weighted in other
cases).} and principal component analysis\footnote{For example, \citep{Bai-2006} uses PCA to show that the
Fama-French three factor model (FF3F) (\citep{Fama-French-1993}) is a
reasonable approximation of the latent factor space in the US equity
market; \citep{Lustig-2011} demonstrates that the first two principal
components in the currency market map to a theoretically motivated
two-factor model consisting of a ``dollar'' (market) factor and a
``slope'' (carry) factor; \citep{Diebold-2006} proposes a PCA-based
factor model for the term structure of interest rates; etc.}. On the other
hand, the APT methodology does not lend itself well to other research
settings. To illustrate, let us consider a generic asset pricing formula
for a representative test asset \(\ty_i\):

\begin{equation}\label{eqn:reg_lv_fm}
\ty_i = c_i + \sum_{j = 1}^{k}{\beta_{i,j} \tx_j} + e,
\quad i \in \mathbb{N}
\end{equation}

Here, \(\tx_j\) are the latent asset pricing factors, \(\beta_{i,j}\)
are asset \(i\)'s factor loadings, and \(c_i\) is the asset-specific
intercept term. This class of models is characterised by two unique
properties: a simple functional relationship, and a large cross-section
of investment instruments \(\ty_i\) which may or may not be latent in
the sense of being linear combinations\footnote{Historically, asset pricing tests have used single stocks,
bonds, currency pairs, regional benchmark indices, and even custom test
portfolios (\citep{Fama-French-1993}, \citep{Lustig-2011} and
\citep{Asness-2013} are some of the most famous examples).}. The key assumption is
that all \(\ty_i\) are driven by a small number of latent systematic
risk factors \(\tx_j\) and some idiosyncratic noise. A breakthrough in
estimating this class of models was achieved by \citep{Bai-2006} with a
proof that the space of the true unobserved factors \(\tx_j\) is
consistently spanned by the set of diffusion indices \(\ty^{di}_i\)
obtained from a PCA decomposition\footnote{It can be noted that several papers have subsequently applied
various extensions to the PCA methodology for the purpose of estimating
latent factors more effectively. For example, \citep{Boucher-2021} uses
a sparse (penalised) version of PCA to approximate latent factors for
the European equity market, \citep{Begusic-2020} uses a clustering
technique to identify globally and locally relevant latent factors, and
\citep{Lettau-2020} develops a procedure called Risk Premium PCA
(RP-PCA) for estimating latent factors that better capture
cross-sectional risk premia in an asset pricing context.} over all \(\ty_i\), as long as
the cross-section of test assets is large enough. This result allows the
researcher to reduce the validation process for any economically
motivated asset pricing model to a three-step procedure: calculate a set
of DI factors \(\{\ty^{di}_i\}\), propose a set of economically
motivated heuristic factors \(\{\tx^h_j\}\), and show that \(\tx^h_j\)
and \(\ty^{di}_i\) span the same space. Formal statistical tests for
this procedure have been proposed by \citep{Bai-2006},
\citep{Parker-2016} and \citep{Andreou-2025}, among others.

Unfortunately, this three-step approach becomes far less effective when
the number of test assets \(\ty_i\) is small and fixed\footnote{\citep{Ahn-2018} shows that CCA is preferable to PCA under these
circumstances.}, or when
the functional relationship no longer reduces to a simple linear model.
The latter is a big stumbling block in macroeconomics: Like APT,
macroeconomic theory has long entertained the notion of a limited number
of unobserved processes governing a larger set of observed macroeconomic
aggregates\footnote{One prominent example of this paradigm is the theory of business
cycles which goes back as far as \citep{Burns-Mitchell-1946}.}; however, the functional relationships in
macroeconomics can often be non-linear or have an unknown lead-lag
dimension. Empirically, \citep{Stock-2002} applies PCA to 215
macroeconomic variables in an attempt to approximate a set of latent
factors driving 8 key economic aggregates. The forecasting performance
of the resulting diffusion indices is compared to that of the
economically motivated observed predictors using an autoregressive model
with exogenous inputs (ARX) up to lag order \(q\). The underlying
regression model can be expressed as:

\begin{equation}\label{eqn:reg_arx_di}
y_{t+1} = c + \sum_{\tau = 0}^{q}{\phi_{\tau} y_{t-\tau}} +
  \sum_{j=1}^{\k} \sum_{\tau=0}^{q} {\beta_{j,\tau} \tx^{di}_{j,t-\tau}} + e
\end{equation}

The results obtained by \citep{Stock-2002} suggest that PCA-based
diffusion indices constructed from a large cross-section of
macroeconomic variables can produce more accurate forecasts for key
macroeconomic aggregates than the individual observed predictors.
However, unlike in APT, there is no longer any guarantee that these
diffusion indices produce an optimal factor model. This point is best
illustrated with same example that \citep{Stock-2002} uses in support of
PCA, namely, that of the empirical Phillips curve. Broadly speaking, the
Phillips curve postulates that inflation can be predicted by real
economic activity; however, there is no consensus about which measure of
economic activity is the most appropriate\footnote{Alternative measures of real economic activity include GDP,
employment, capacity utilisation, business surveys, and various activity
``nowcasts'', to name a few.}. One could say that
applying PCA to a broad set of real activity measures negates the
question of variable selection altogether because it allows the
researcher to extract all the relevant sources of variation from the
real activity data. However, PCA also reveals absolutely nothing about
the relevance of each of those sources of variation for inflation
specifically. Just like with the observed measures of real activity,
inflation can be influenced by any one of the diffusion indices of real
activity or, indeed, any linear combination thereof. What's more, if no
diffusion indices can be removed from the pool of candidate regressors
out of hand, PCA becomes somewhat redundant altogether (\ref{pca_redundancy}
expresses this argument more formally).

An effective estimation procedure for the broad class of latent variable
models defined by (\ref{eqn:reg_lv_b}) would need to address the shortcomings
of both PCA and domain-specific heuristic approaches. To this end, this
paper proposes a relatively simple solution: simultaneously estimate all
the coefficients in (\ref{eqn:reg_lv_b}) using the method of least squares.
Formally, define:

\begin{itemize}
\item \(n \geq 1\) proxy measurements for the dependent variable \(\ty\),
collected into a \(1 \times n\) row vector \(Y\)
\item \(m_j \geq 1\) proxy measurements for each explanatory variable
\(\tx_j\) where \(1 \leq j \leq \k\), collected into a sequence of
row vectors \(\seq{X_j} \equiv \seq{X_j}_{j=1}^{\k}\)
\item An \(s \times n\) matrix of sample observations \(\Y\) for the
variables in \(Y\) and a sequence of sample observation matrices
\(\seq{\X_j}\), each with dimensions \(s \times m_j\), for the
explanatory variable vectors \(X_j\).
\end{itemize}

The least-squares solution to (\ref{eqn:reg_lv_b}) in sample form is then
obtained by the vectors \(\gh\), \(\wh\) and \(\seq{\oh_j} \equiv
\seq{\oh_j}_{j=1}^{\k}\) which satisfy:

\begin{equation}\label{eqn:sdf_solution}
\gh,\wh,\seq{\oh_j} = \underset{\g,\w,\seq{\o_j}}{\mathrm{argmin}}
                      \| \Y \w - \F_{\g}(\seq{\X_j \o_j}) \|^2_2
\end{equation}

In practice, we would often need to impose constraints on \(\w\), \(\o\)
and/or \(\g\) because the economic interpretation of the model or the
underlying variables implies certain restrictions (e.g., an asset
pricing factor may have to be a zero-cost investment portfolio), or
because we wish to avoid trivial solutions such as \(\w = \bm{0}\). For
example, PCA and partial least squares (PLS) models
(\citep{Wold-1975,Wold-1982}) impose a unit length constraint on \(\w\)
and \(\o\), while CCA is implemented with unit variance constraints on
\(\Y\w\) and \(\X\o\). For now, let us define an arbitrary set of
constraint functions \(\{ g_i \colon \mathbb{R}^{M+n+p} \longrightarrow
\mathbb{R}, \, i \in \mathbb{N} \}\), where \(M = \sum_{j =
1}^{\k}{m_j}\) and \(p\) is the length of \(\g\), and write the full
constrained optimisation problem as:

\begin{equation}\label{eqn:sdf_solution_full}
\begin{split}
\underset{\g,\w,\seq{\o_j}}{\mathrm{min}} &
                      \| \Y \w - \F_{\g}(\seq{\X_j \o_j}) \|^2_2 \\
\text{s.t.} \quad &
  \{g_i \left( \g,\w,\seq{\o_j} \right) \geq 0: i \in \mathbb{N}\}
\end{split}
\end{equation}

This paper will henceforth refer to (\ref{eqn:sdf_solution_full}) as the
``Supervised Diffusion Framework'' (SDF) for latent variable modelling.

From a mathematical perspective, SDF can be viewed as a generalisation
over regression analysis on the one hand, and traditional variance
optimisation models on the other (Section \ref{sdf_const} further elaborates
on this point). From a practical perspective, the usefulness of SDF has
at least three facets. First, it address the problem of interpretability
inherent in PCA. Unlike PCA-based diffusion indices, supervised
diffusion indices (SDI) conform to a predefined functional relationship
in which the input variables would usually have an a priori economic
meaning\footnote{\citep{Ahn-2018} makes a similar argument with respect to latent
variables estimated by CCA. Section \ref{cca} shows that CCA is a special
type of a supervised diffusion model.}. Second, a supervised diffusion model (SDM) can be used
to improve the accuracy of empirical measurement for variables which are
not altogether latent yet not very accurately observed. For example, an
economic model describing the relationship between GDP and unemployment
could be used to better approximate both of these aggregates from sets
of alternative proxy measurements. Third, SDMs can produce synthetic
variables whose sole purpose is to track other random processes in
non-trivial ways. For example, an SDM can be used to build an investment
portfolio which tracks the market's expectation for a macroeconomic
index over multiple forecast horizons, or to devise a systematic trading
strategy which can be accurately predicted by a leading signal. The
concluding paragraph of Section \ref{sdf_const} further expands on this use
case by looking at SDMs through the lens of portfolio optimisation.
\subsection{SDF as a Variance Optimisation Framework}
\label{sdf_const}
The SDF formula is inherently a generalised variance optimisation
problem which subsumes a number of popular statistical techniques in
economics and finance as special cases. To demonstrate this, let us
explicitly add an intercept term \(c\) to equation (\ref{eqn:reg_lv_a}) and
rewrite it as:

\begin{equation}
\ty = c + \F_{\g} \left( \seq{\tx_j} \right) + e
\end{equation}

The solution for \(c\) in a least squares setting is well documented in
prior literature, but it is replicated below for completeness. Defining
a shorthand \(\yh := \F_{\g} \left( \seq{\tx_j} \right)\), as well as
the sample counterparts for \(\ty\) and \(\yh\) as \(\tmy\) and
\(\myh\), respectively, we can rewrite the optimisation problem
(\ref{eqn:sdf_solution}) as:

\begin{equation}\label{eqn:sdf_solution_short}
\min \| \tmy - \left( \1_s c + \myh \right)\|^2_2
    = \min \left( \tmy'\tmy + \1_s'\1_s c + \myh' \myh - 2 c \1_s' \tmy
    - 2 \tmy'\myh + 2  c \1_s' \myh \right)
\end{equation}

Here, \(s\) denotes the sample size of \(\tmy\) and \(\myh\), and
\(\1_s\) is a column vector of ones of sample length \(s\). Note that
this problem is convex with respect to \(c\), so an unconstrained
solution can be found by setting the partial derivative to zero:

\begin{equation}
\begin{split}
& \frac{\partial}{\partial \, c} \| \tmy - \left( \1_s c + \myh \right)\|^2_2
        = 2 \1_s c - 2 \tmy + 2 \myh \\
& \1_s c = \tmy - \myh
\end{split}
\end{equation}

Pre-multiplying both sides by \(\1_s' / s\) yields:

\begin{equation}\label{eqn:sdf_solution_c}
\ch = \frac{\1_s' \tmy}{s} - \frac{\1_s' \myh}{s} = \mean{\tmy} - \mean{\myh}
\end{equation}

where \(\mean{\tmy}\) and \(\mean{\myh}\) denote the sample means of
\(\tmy\) and \(\myh\), respectively. By plugging the unconstrained
solution for \(c\) back into (\ref{eqn:sdf_solution_short}) we arrive at a
simplified version of (\ref{eqn:sdf_solution_short}), namely:

\begin{equation}\label{eqn:sdf_solution_cov}
\min \| \left(\tmy - \1_s\mean{\tmy} \right) -  \left(\myh - \1_s\mean{\myh} \right)\|^2_2
    = \min \left( \Sm_{\ty} - 2 \Sm_{\ty\yh} + \Sm_{\yh} \right)
\end{equation}

Here, \(\Sm_A\) denotes the sample covariance matrix over random
variable vector \(A\), and \(\Sm_{AB}\) denotes the sample covariance
matrix from \(A\) to random variable vector \(B\) with the elements of
\(A\) as rows and the elements of \(B\) as columns\footnote{In theory, a constraint could be imposed on \(c\) which makes it
deviate from the solution given by (\ref{eqn:sdf_solution_c}). However, even
with a constraint on \(c\) the simplification provided by
(\ref{eqn:sdf_solution_cov}) remains meaningful with the caveat that something
other than the sample mean would need to be subtracted from \(\tmy\) or
\(\myh\) in equation (\ref{eqn:sdf_solution_cov}). In other words, if a
constraint is imposed on \(c\), \(\Sm\) will represent a biased sample
variance-covariance estimate for at least one of the variable vectors
involved. See \citep{Lettau-2020} for an example of when this may be
desirable.}. Note that in
this particular case, \(\Sm_{\ty}\), \(\Sm_{\ty\yh}\) and \(\Sm_{\yh}\)
all resolve to scalar values because both \(\ty\) and \(\yh\) are
univariate. Put simply, the SDF objective is nothing more than a
variance minimisation problem which subsumes a few popular models like
PCA and mean-variance portfolio optimisation as special cases. To
illustrate, consider a trivial SDM in which \(\F_{\g} \brac{
\seq{\tx_j}} = 0\). In this case, (\ref{eqn:sdf_solution_cov}) reduces to:

\begin{equation}\label{eqn:sdm_const}
\begin{split}
& \min \Sm_{\ty} \equiv \underset{\w}{\min} \, \w' \Sm_{Y} \w \\
\text{s.t. } &
  \{g_i \left( \w \right) \geq 0: i \in \mathbb{N}\}
\end{split}
\end{equation}

where \(\{g_i \left( \w \right) \geq 0: i \in \mathbb{N}\}\) is an
arbitrary set of constraints on \(\w\) (without the inferential
component, \(\o\) and \(\g\) become redundant). Here, the PCA objective
function\footnote{Technically speaking, the goal of PCA is to maximise variance
rather than to minimise it. However, objective (\ref{eqn:sdm_const}) is still
solved by PCA decomposition with the caveat that variance is minimised
by the last principal component and maximised by the first.} is obtained by applying a single constraint of the form
\(\w' \w = 1\). Similarly, if \(\Y\) is a matrix of historical asset
returns, a minimum variance portfolio is obtained with the constraint \(\1_{n}' \w = 1\), and a portfolio on the efficient frontier as in
\citep{Markowitz-1952} is produced with the help of a second constraint
of the form \(\mean{\Y} \w = r\) for a target return \(r\).

Conceptually, we can look at this result in three ways. First, SDF can
be described as a blend between traditional regression analysis and
variance optimisation models. SDF reduces a traditional regression
problem when all the input variables have one observed proxy each (i.e.,
when \(\w\) and all \(\o_j\) are scalars or multiples of standard basis
vectors), and it reduces to a traditional variance minimisation problem
if there is no inferential component \(\F_{\g} \left( \seq{\tx_j}
\right)\). Second, if we describe SDF as a ``supervised'' diffusion
methodology, we can conversely describe traditional variance
optimisation problems like PCA and mean-variance portfolio optimisation
as unsupervised diffusion models (once again, the ``supervised'' part
comes from the inferential component \(\F_{\g} \left( \seq{\tx_j}
\right)\) which affects the estimates of the latent variables by
endowing them with inferential properties).

Third, if traditional mean-variance portfolio optimisation can be
described as a trivial SDM, then conversely, non-trivial SDMs have a
place in investment management. Notably, a latent variable constructed
from investible assets under portfolio-style constraints becomes a
viable investment strategy in its own right. This opens up a wide range
of potential applications for SDMs in the investment industry, with
examples including asset allocation, systematic trading, and the
construction of tracker funds. As a starting point, the empirical study
in Section \ref{empirical_results} produces a ``market SDI''which can be
interpreted as a tracker portfolio for growth expectations in the United
States.
\section{Latent Variable Autoregression with Exogenous Inputs}
\label{larx}
The SDF formula can be used to turn any traditional regression model
into a supervised diffusion model (SDM) by allowing the input variables
to be latent. This section provides a tangible example of this process
by deriving an SDM counterpart of the ubiquitous autoregressive model
with exogenous inputs (ARX).

The ARX model is one of the most popular regression models in economics
and finance and in time series analysis more broadly. In ARX models, the
dependent is expressed as a linear function of its own past values and
the (present and) past values of one or more explanatory variables.
Special cases of the ARX model include autoregressive models with no
exogenous inputs (AR), lead-lag regression models with no autoregressive
element, and multiple linear regression models with no lag structure.

To derive a latent variable ARX model (LARX), we can start from the
standard ARX specification and assume that the dependent and all the
explanatory variables are latent. First, let us define the LARX problem
in functional form. Let \(v\) be a ``version iterator'' representing
different versions\footnote{Note that in LARX models, versions need not be synonymous with
time series lags. For example, \(\ty \equiv Y \w\) may represent the
return on an investment index and \(\tx_i \equiv X_i \w\) may be some
other property of the same index, such as market capitalisation as in
the ``size'' factor of \cite{Fama-French-1993}.} of latent variable \(j\), i.e., a variable
identified by LV weight vector \(\o_j\), and let the total number of
versions for variable \(j\) be \(V_j \geq 1\). Let \(K\) denote the
total number of unique exogenous variables \(\tx_j\) excluding versions,
i.e., the total number of unique weight vectors \(\o_j\). We can refer
to the ``autoregressive'' versions of the dependent variable as
\(\ta_v\), with the number of autoregressive versions denoted by the
capital \(V_a \geq 0\). A generic formula for the LARX model can then be
written as:

\begin{equation}\label{eqn:larx_a}
\ty = c + \sum_{v = 1}^{V_a} \phi_{v} \ta_{v} +
            \sum_{v = 1}^{V_1} \beta_{1,v} \tx_{1,v} +
            \sum_{v = 1}^{V_2} \beta_{2,v} \tx_{2,v} +
            \ldots + \sum_{v = 1}^{V_K} \beta_{K,v} \tx_{K,v} + e
\end{equation}

As a reminder, any input variable in this model becomes non-latent if
the corresponding weight vector only has one non-zero element, and the
entire problem reduces to a traditional ARX model when all the input
variables are non-latent.

Models of this kind are more conveniently solved in matrix form. The
matrix representation for (\ref{eqn:larx_a}) can be derived with the help of
the Kronecker product, traditionally denoted by ``\(\otimes\)'', and the
block-wise Kronecker product denoted by ``\(\odot\)'' as in
\citep{Khatri-Rao-1968}. The autoregressive terms can be written as:

\begin{equation*}
\sum_{v = 1}^{V_a} \phi_{v} \ta_{v} = A \pw + \mathbf{r}_a \p
\end{equation*}

where \(A = \begin{bmatrix}A_1, & A_2, & \cdots & A_{V_a}
\end{bmatrix}\) is a horizontal concatenation of the autoregressive
proxy vectors for the dependent, \(\p = \begin{bmatrix}\phi_1, & \phi_2,
& \cdots & \phi_{V_a} \end{bmatrix}'\) is a column vector containing the
respective autoregressive coefficients, and \(\mathbf{r}_a
= \begin{bmatrix}r_{a,1}, & r_{a,1}, & \cdots & r_{a,V_a}
\end{bmatrix}\) is a vector of approximation errors where \(r_{a,v} =
\ta_v - A_v \w\) for \(1 \leq v \leq V_a\). Similarly, for each
individual explanatory variable \(\tx_j\) we have:

\begin{equation*}
\sum_{v = 1}^{V_j} \beta_{j,v} \tx_{j,v} =
  X_j \left( \b_j \otimes \o_j \right)  + \mathbf{r}_j \b_j
\end{equation*}

with \(X_j = \begin{bmatrix}X_{j,1}, & X_{j,2}, & \cdots & X_{j,V_j}
\end{bmatrix}\), \(\b_j = \begin{bmatrix}\beta_{j,1}, & \beta_{j,2}, &
\cdots & \beta_{j,V_j} \end{bmatrix}'\), and \(\mathbf{r}_j\) a (\(1
\times V_j\)) vector of approximation errors for the different versions
of \(\tx_j\). The explanatory terms can be written out as:

\begin{equation*}
\sum_{j = 1}^{K} \sum_{v = 1}^{V_j} \beta_{j,v} \tx_{j,v}
    = \sum_{j = 1}^{K} X_j \left( \b_j \otimes \o_j \right) +
      \sum_{j = 1}^{K} \mathbf{r}_j \b_j
    = X \bo + \sum_{j = 1}^{K} \mathbf{r}_j \b_j
\end{equation*}

Here, \(X = \begin{bmatrix} X_1 | X_2 | \cdots | X_K\end{bmatrix}\) is a
row vector with \(K\) blocks corresponding to the individual \(X_j\),
\(\b = \begin{bmatrix} \b_1' | \b_2' | \cdots | \b_K'\end{bmatrix}'\) is
a column vector with \(K\) blocks for the individual \(\b_j\), and
\(\o\) has \(K\) blocks containing the individual \(\o_j\) such that
\(\o = \begin{bmatrix} \o_1' | \o_2' | \cdots | \o_K'\end{bmatrix}'\).

The complete LARX formula is then concisely defined in matrix form as:

\begin{equation} \label{eqn:larx_b}
Y \w = c + A \pw + X \bo + \epsilon
\end{equation}

where \(Y\), \(A\) and \(X\) comprise the underlying observed variable
space and \(\epsilon\) collects all the error terms.

Given a data sample of length \(s\) represented by observation matrices
\(\Y\), \(\A\) and \(\X\), the least squares optimisation problem for
the LARX model can be defined as

\begin{equation}\label{eqn:clarx_optim_a}
\underset{c,\p,\b,\w,\o}{\min} \, \| \Y \w - \1_s c - \A \pw - \X \bo \|^2_2
\end{equation}

In terms of optimisation constraints, the only requirement for a minimal
LARX implementation is a scaling constraint on \(\w\) to avoid the
trivial solution given by \(\w = \0\). However, in view of the
investment management use case discussed in the previous section, we can
start by deriving a constrained implementation of the LARX model (call
it CLARX) with full Markowitz-style constraints on the variance and sum
of weights of each latent variable. The ``unconstrained'' LARX model can
then be derived by setting the appropriate Lagrange multipliers to zero.
The variance constraint on the dependent takes the form of \(\w' \Sm_{Y}
\w = \s2_y\). The variance constraint on explanatory variable \(j\)
takes the form of \(\o_j' \Sm_{X_{j,c_j}} \o_j = \s2_j\), where \(c_j\)
is an arbitrarily chosen version (lag) of \(\tx_j\). The sum-of-weights
constraints take the form of \(\1_n' \w = l_y\) and \(\1_{m_j}' \o_j =
l_j\), for some scalar values \(l_y\) and \(l_j\). The complete
constrained optimisation problem then becomes\footnote{To the author's best knowledge, the closest precedent for this
optimisation problem in prior literature is the EDACCA model defined in
\cite{Xu-2024}. The EDACCA model has a number of important differences,
including a different set of constraints and the use of a single
explanatory variable.}:

\begin{equation}\label{eqn:clarx_optim_b}
\begin{split}
\underset{c,\p,\b,\w,\o}{\min} \, & \| \Y \w - \1_s c - \A \pw - \X \bo \|^2_2 \\[10pt]
\text{s.t. } & \w' \Sm_{Y} \w = \s2_y, \quad
  \o_j' \Sm_{X_{j,c_j}} \o_j = \s2_j \text{ for } 1 \leq j \leq K, \\[10pt]
& \1_n' \w = l_y, \quad \1_{m_j}' \o_j = l_j \text{ for } 1 \leq j \leq K
\end{split}
\end{equation}

The solution for \(c\) is covered by section \ref{sdf_const}. In this case it
reduces to:

\begin{equation}\label{eqn:clarx_solution_const}
\ch = \overline{\Y} \w - \overline{\A} \pw - \overline{\X} \bo
\end{equation}

where \(\overline{\Y}\), \(\overline{\A}\) and \(\overline{\X}\) are row
vectors containing the column-wise sample means of \(Y\), \(A\) and
\(X\). Plugging this solution back into (\ref{eqn:clarx_optim_b}) and
expanding, we obtain a simplified optimisation problem of the form:

\begin{align} \label{eqn:clarx_optim_c}
\begin{split}
\underset{\p,\b,\w,\o}{\min} \, &  \w' \Sm_{Y} \w + \pw' \Sm_{A} \pw + \bo' \Sm_{X} \bo \\[5pt]
        & - 2 \left[ \w' \Sm_{YA} \pw + \w' \Sm_{YX} \bo - \pw' \Sm_{AX} \bo \right]
\end{split} \\[15pt]
\begin{split}
\text{s.t. } & \w' \Sm_{Y} \w = \s2_y, \quad
  \o_j' \Sm_{X_{j,c_j}} \o_j = \s2_j \text{ for } 1 \leq j \leq K, \\[10pt]
& \1_n' \w = l_y, \quad \1_{m_j}' \o_j = l_j \text{ for } 1 \leq j \leq K
\end{split} \nonumber
\end{align}

This is a convex problem with equality constraints, which means it can
be solved using the method of Lagrange multipliers (LM). As with the
rest of the model, we can represent the LM terms in matrix form with the
help of the following vectors:

\begin{align*}
\ts2_x & = \begin{bmatrix}
    \s2_{x,1}, &
    \s2_{x,2}, &
    \s2_{x,3}, &
    \ldots &
    \s2_{x,K}
\end{bmatrix}', \quad
& \tl_x = \begin{bmatrix}
    \l_{x,1}, &
    \l_{x,2}, &
    \l_{x,3}, &
    \ldots &
    \l_{x,K}
\end{bmatrix}' \\[10pt]
\lp & = \begin{bmatrix}
    l_{p,1}, &
    l_{p,2}, &
    l_{p,3}, &
    \ldots &
    l_{p,K}
\end{bmatrix}', \quad
& \tl_p = \begin{bmatrix}
    \l_{p,1}, &
    \l_{p,2}, &
    \l_{p,3}, &
    \ldots &
    \l_{p,K}
\end{bmatrix}'
\end{align*}

Here, \(\ts2_x\) is a \(K \times 1\) column vector of variance targets
for the chosen versions of \(\tx_j\); \(\lp\) is a \(K \times 1\) column
vector of sum-of-weights targets for the respective LV weight vectors
\(\o_j\); and \(\tl_x\) and \(\tl_p\) are the corresponding vectors of
LM coefficients. The full constraint terms for the Lagrangian function
can then be defined in matrix form with the help of the block-wise
direct sum operator introduced in Section \ref{blockwise_direct_sum}. The
sum-of-weights constraints become:

\begin{equation*}
\odt \1_{\o} = \lp
\end{equation*}

where \(\1_{\o}\) is a column vector of ones with the same block
structure as \(\o\).

For the variance constraints we require an indexer to identify which
version of each \(\tx_j\) has a variance constraint assigned to it. Let
\(\seq{\u_j}\) be a sequence of \(K\) logical vectors, each of size
\(V_j\) (i.e., the number of versions of \(\tx_j\)) with a value of 1 in
position of \(c_j\), i.e., the version of \(\tx_j\) that has a variance
constraint, and zeros elsewhere. The column vector \(\u := \left[
\seq{\u_j} \right]\) is a vertical concatenation of \(\seq{\u_j}\) which
has the same size and block structure as the coefficient vector \(\b\).
We also need a block-diagonal matrix \(\Sm_X^d\) containing the
covariance matrices of the individual \(X_{j,v_j}\) along the diagonal,
but no covariances across different variables or versions. Once again,
this can be done with the help of the block-wise direct sum operator:

\begin{equation*}
\begin{split}
& \Smd_X = \left[ \left( \X - \overline{\X} \right)^{\oplus} \right]'
\left( \X - \overline{\X} \right)^{\oplus}
= \begin{pmatrix}
\Smd_{X_1}, & \0, & \cdots & \0 \\
\0, & \Smd_{X_2}, & \cdots & \0 \\
\vdots & \vdots & \ddots & \vdots \\
\0, & \0, & \cdots & \Smd_{X_K}
\end{pmatrix} \\
\text{with} \quad &
\Smd_{X_j} = \left[ \left( \X_j - \overline{\X_j} \right)^{\oplus} \right]'
           \left( \X_j - \overline{\X_j} \right)^{\oplus}
       = \begin{pmatrix}
            \Sm_{X_{j,1}}, & \0, & \cdots & \0 \\
            \0, & \Sm_{X_{j,2}}, & \cdots & \0 \\
            \vdots & \vdots & \ddots & \vdots \\
            \0, & \0, & \cdots & \Sm_{X_{j,V_j}}
       \end{pmatrix} \quad \text{for } 1 \leq j \leq K
\end{split}
\end{equation*}

The variance constraints on all \(\tx_j\) can then be expressed using a
single quadratic form:

\begin{equation*}
\left[ \uo' \right]^{\oplus} \Smd_X \uo = \ts2_x
\end{equation*}

and the full Lagrangian function for (\ref{eqn:clarx_optim_c}) can be written
as:

\begin{equation} \label{eqn:clarx_lagr}
\begin{split} 
  \Lagr( & \p,\b,\w,\o,\l_y,\l_l,\tl_x,\tl_p) = \w' \Sm_{Y} \w +
        \pw' \Sm_{A} \pw + \bo' \Sm_{X} \bo \\[12pt]
       & - 2 \w' \Sm_{YA} \pw - 2 \w' \Sm_{YX} \bo + 2 \pw' \Sm_{AX} \bo
       + \l_y \left( \w' \Sm_{Y} \w - \s2_y \right) \\[12pt]
       & + \l_l \left( \1_n' \w - l_y \right)
       + \tl_x' \left\{ \left[ \uo' \right]^{\oplus} \Smd_X \uo - \ts2_x \right\}
       + \tl_p' \left[ \odt \1_{\o} - \lp \right]
\end{split}
\end{equation}

Note that in expanded form the terms under \(\tl_x\) and \(\tl_p\)
resolve to:

\begin{align*}
& \tl_x' \left\{ \left[ \uo' \right]^{\oplus} \Smd_X \uo - \ts2_x \right\} =
    \sum_{j=1}^{K} \l_{x,j} \left( \, \o_j' \Sm_{X_{j,c_j}} \o_j - \s2_{x,j} \right)\\
& \tl_p' \left[ \odt \1_{\o} - \lp \right] =
    \sum_{j = 1}^K \l_{p,j} \left(\o_j' \1_{m_j} - l_{p,j} \right)
\end{align*}

where \(m_j\) represents the number of observed variables used to
approximate \(\tx_j\).

Like both mean-variance portfolio optimisation and linear least-squares
regression, the Lagrangian function specified by (\ref{eqn:clarx_lagr}) is
convex and can be solved by setting various partial derivatives to zero.
The derivations for the individual coefficient vectors and Lagrange
multipliers are somewhat involved and deferred to \ref{derivation_w_o_p_g}
and \ref{derivation_ry_rl}. The solution is given by a fixed point over the
following system of equations:

\begin{subnumcases}{\hspace{-5em}\label{eqn:clarx_solution_full}}
\wh = \frac{1}{\ry} \left[ \Sm_{Y}^{-1} \left( \tv_1 + \tv_2 \right)
   - \r_l \Sm_{Y}^{-1} \1_n \right]
   \label{eqn:clarx_fp_1}  \\[7pt]
\oh = \left[ \bi' \Sm_{X} \bi + \M_2 \brac{\tl_x \odot I_{\o}} \right]^{-1}
     \left[ \tv_3 - \frac{1}{2} \t1od \tl_p \right] 
    \label{eqn:clarx_fp_2} \\[7pt]
\ph = \left[ \wi' \Sm_{A} \wi \right]^{-1}
     \wi' \left[ \Sm_{AY} \w - \Sm_{AX} \bo \right]
     \label{eqn:clarx_fp_3} \\[7pt]
\bh = \left[ \oi' \Sm_{X} \oi \right]^{-1}
      \oi' \left[ \Sm_{XY} \w - \Sm_{XA} \pw \right]
     \label{eqn:clarx_fp_4} \\[7pt]
\ry = \frac{ \left( n \w - l_y \1_n \right)' \left( \tv_1 + \tv_2 \right) }
             {n \s2_y - l_y \1_n' \Sm_{Y} \w}
     \label{eqn:clarx_fp_5} \\[7pt]
\r_l = \frac{1}{n} \left[ \1_n' \left( \tv_1 + \tv_2 \right) -
                          \ry \, \1_n' \Sm_{Y} \w \right]
     \label{eqn:clarx_fp_6} \\[7pt]
\tl_x = \left[ \M_1 \V - \P \left( \t1od \right)' \M_2 \od \right]^{-1}
  \left( \od \M_1 - \t1od \P \right)' \left( \tv_3 - \tv_4 \right)
     \label{eqn:clarx_fp_7} \\[7pt]
\tl_p = 2 \M_1^{-1} \left( \t1od \right)' 
        \left[ \left( \tv_3 - \tv_4 \right) - \M_2 \od \tl_x \right] 
     \label{eqn:clarx_fp_8}
\end{subnumcases}

with the shorthand notations:

\begin{flalign*}
& \1_{\o} \text{ a column vector of ones with the same length and block structure as } \o & \\
& \tv_1 = \left[
    \pi' \Sm_{AY} +
    \Sm_{YA} \pi -
    \pi' \Sm_{A} \pi
\right] \w & \\
& \tv_2 = \left[\Sm_{YX} - \pi' \Sm_{AX} \right] \bo &  \\
& \tv_3 = \bi' \left[ \Sm_{XY} - \Sm_{XA} \pi \right] \w & \\
& \tv_4 = \bi' \Sm_{X} \bio & \\
& \V = \diag(\ts2_x) & \\
& \P = \diag(\lp) & \\
& \M_1 = \left( \t1od \right)' \t1od & \\
& \M_2 = \ui' \Smd_X \ui & \\
& \ry = 1 + \ly, \quad \r_l = \frac{\l_l}{2} &
\end{flalign*}

In the above expressions, \(I_a\) represents an identity matrix of
scalar size \(a\), while \(I_{\a}\) represents an identity matrix with
the same size and block structure as that of the vector \(\a\) (bold).

Here, it must be noted that the derivation steps required to arrive at
(\ref{eqn:clarx_solution_full}) follow directly from the properties of the
block-wise direct sum operator and the block-wise Kronecker product for
vectors introduced in Section \ref{blockwise_direct_sum}, \ref{bds_vs_mmul} and
\ref{block_kron_factorisation}. A thorough discussion of the relevance of
these results for the broader field of linear algebra is beyond the
scope of this paper; however, the derivations found in
\ref{derivation_w_o_p_g} and \ref{derivation_ry_rl} generalise rather easily to
other use cases in matrix calculus, including a larger class of
Lagrangian optimisation problems in which an arbitrary sequence of
coefficient vectors needs to be estimated in the presence of
interactions and case-by-case constraints.

In practice, the solution to (\ref{eqn:clarx_solution_full}) can be obtained
by fixed point iteration, which has sub-linear computational complexity
if the underlying regression problem is well-specified. Initial guesses
are required for \(\w\), \(\o\) and \(\p\). Equations can be estimated
in the same order as shown above. The first iteration can start at
(\ref{eqn:clarx_fp_4}). Four matrix inversions need to be recalculated
at each iteration step (namely,
(\ref{eqn:clarx_fp_2})-(\ref{eqn:clarx_fp_4}) and
(\ref{eqn:clarx_fp_7})) because the corresponding matrices are derived
from the estimated coefficient vectors; however, equations
(\ref{eqn:clarx_fp_3}) and (\ref{eqn:clarx_fp_4}) can be reformulated in
terms of the Moore-Penrose inverses (\citep{Penrose-1955,Bjerhammar-1951,Moore-1920}) of the matrices \((\A - \overline{\A})
\wi\) and \((\X - \overline{\X}) \oi\), respectively, which makes them
solvable by SVD.
\subsection{The Unconstrained Case and the Intuitive Interpretation}
\label{larx_ols_interpretation}
The constrained solution to the LARX model given by
(\ref{eqn:clarx_solution_full}) may look somewhat complex at first. However,
it simply says that the objective function is minimised when each of the
four coefficient vectors (\(\w\), \(\o\), \(\p\), \(\b\)) comes as close
to a least-squares solution to (\ref{eqn:larx_b}) as possible without
violating a portfolio constraint. We can see this more clearly by
removing the portfolio-style constraints from the model and only keeping
the one constraint required for model convergence\footnote{A scaling constraint has to be imposed on the dependent weight
vector \(\w\) to prevent the problem from converging to the trivial
solution given by \(\w = \0\). Several possibilities exist for such a
constraint, the most popular ones being a length constraint \(\w' \w =
k\) and a unit variance constraint \(\w' \Sm_{Y} \w = \s2\), where \(k\)
and \(\s2\) can be arbitrary scalars but are usually set to unity. A
unit length constraint is used in PCA and PLS; a unit variance
constraint is used in CCA. This paper chooses a unit variance constraint
for several reasons, including the ability to draw a parallel between
LARX and CCA, the usefulness of a variance constraint in portfolio
management settings, and the better interpretability of the concept of
variance in finance and economics more generally.}, namely, that
on the variance of \(\ty\). Setting the corresponding Lagrange
multipliers (\(\l_l\), \(\tl_x\) and \(\tl_p\)) to zero, we arrive at
the following (unconstrained) solution:

\begin{subnumcases}{\hspace{-5em}\label{eqn:larx_solution_full}}
\wh = \frac{1}{\ry} \Sm_{Y}^{-1} \left( \tv_1 + \tv_2 \right)
   \label{eqn:larx_fp_1}  \\[10pt]
\oh = \left[ \bi' \Sm_{X} \bi \right]^{-1} \bi' \left[ \Sm_{XY} - \Sm_{XA} \pi \right] \w
    \label{eqn:larx_fp_2} \\[10pt]
\ph = \left[ \wi' \Sm_{A} \wi \right]^{-1}
     \wi' \left[ \Sm_{AY} \w - \Sm_{AX} \bo \right]
     \label{eqn:larx_fp_3} \\[10pt]
\bh = \left[ \oi' \Sm_{X} \oi \right]^{-1}
     \oi' \left[ \Sm_{XY} - \Sm_{XA} \pi \right] \w
     \label{eqn:larx_fp_4} \\[10pt]
\ry = \frac{ \w' \left( \tv_1 + \tv_2 \right) }{\s2_y}
     \label{eqn:larx_fp_5}
\end{subnumcases}

with \(\tv_1\), \(\tv_2\) and \(\ry\) all defined as before. In this
model, each of the four coefficient vectors \(\wh\), \(\oh\), \(\ph\),
\(\bh\) can be expressed as an OLS solution to a linear regression
problem of the form \(a = c + B \g_i + e\), where both \(a\) and \(B\)
are derived using the other three vectors. For example, the coefficient
\(\bh\) solves:

\begin{equation*}
\begin{split}
& \tr_a = c + \tX \b + e \\
\text{with } & \tX := \begin{bmatrix} \tx_{1,1}, & \tx_{1,2}, & \ldots & \tx_{1, V_1}, & \ldots & \tx_{K, V_k} \end{bmatrix}
\quad \text{and} \quad \tr_a := \ty - \sum_{v = 1}^{V_a} \phi_v \ta_v
\end{split}
\end{equation*}

Here, \(\tX\) is a vector of all explanatory variables in latent form,
which can be estimated by \(X \brac{I_{\b} \odot \oh}\) if \(\oh\) is
known. Similarly, the variable \(\tr_a\) is the autoregressive residual
of \(\ty\) which can be estimated by \(Y \wh - A \brac{\ph \otimes \wh}\). In other words, \(\bh\) solves a regression problem of the form \(a
= c + B \g_i + e\) where \(a := Y \wh - A \brac{\ph \otimes \wh}\) and
\(B := X \brac{I_{\b} \odot \oh}\).

Similar transformations can be applied to derive the other coefficient
vectors. The corresponding formulas for \(a\) and \(B\) are provided in
Table \ref{table:larx_ols_matrices} below:

\begin{table}[ht]
\centering
\caption{OLS Inputs for LARX Coefficients}
\label{table:larx_ols_matrices}
\begin{threeparttable}
  \begin{tabular}{CCC}
  \toprule
  \text{Coefficient vector } \g_i & \text{Dependent variable } a & \text{Explanatory vector } B \\
  \midrule
  \w & X \brac{\bh \odot \oh} & Y - A \brac{\ph \otimes I_n} \\
  \p & Y \wh - X \brac{\bh \odot \oh} & A \brac{ I_{V_a} \otimes \wh } \\
  \o & Y \wh - A \brac{\ph \otimes \wh} & X \brac{ \bh \odot I_{\o} } \\
  \b & Y \wh - A \brac{\ph \otimes \wh} & X \brac{ I_{\b} \odot \oh } \\
  \bottomrule
  \end{tabular}
  
\begin{tablenotes}[flushleft]

 \item \footnotesize Note: Formulas for a dependent variable \(a\) and
 an explanatory vector \(B\) which conditionally produce a standard
 linear regression model \(a = c + B \g_i + e\) for which the given
 coefficient vector \(\gh_i\) provides the OLS solution.

\end{tablenotes}

\end{threeparttable}
\end{table}

This observation is useful in two respects. First, it provides an
alternative interpretation to the LARX model and its regression
parameters. Second, it creates a convenient shortcut for the purposes of
heuristic feature selection. If a coefficient vector can be estimated by
OLS, then so can its standard error, with the caveat that both are
obtained conditionally. Conditional standard errors fall short of
providing a full picture of model uncertainty, but they can shed some
light on the relative statistical significance of the individual
elements in each coefficient vector. This should allow the researcher to
perform heuristic feature selection for the LARX model in the same
manner as one would do with a standard least-squares regression.
\section{Special Cases of the LARX Model}
\label{larx_special_cases}
Like ARX, the LARX methodology subsumes several regression models as
special cases, and some of these special cases have interesting
mathematical properties in their own right. The main special cases of
the LARX model are presented in Table \ref{table:larx_summary}. Three of them
are examined more closely in this section.

\afterpage{
\begin{landscape}
\begin{table}[p]
\caption{Special Cases of the LARX model}
\label{table:larx_summary}
\renewcommand{\arraystretch}{2}
\begin{adjustbox}{width=\textwidth}
\begin{threeparttable}
  \begin{tabular}{p{0.2\textwidth} p{0.55\textwidth} p{0.5\textwidth} p{0.45\textwidth}}

  &
  \thead{LARX: Latent Variable Autoregression with Exogenous Inputs} &
  \thead{3LR: Latent Variable Lead-Lag Regression} &
  \thead{LAR: Latent Variable Autoregression} \\

  \toprule

  Description &

    The most generic implementation -- see Section \ref{larx}. &

    No autoregressive term. An SDM counterpart to a multivariate
    lead-lag regression. &

    No exogenous inputs. An SDM equivalent to the traditional
    autoregressive (AR) model. \\

  \midrule

  Regression formula &
  \( \ty = c + \sum_{v = 1}^{V_a} \phi_{v} \ta_{v} +
                \sum_{j = 1}^{K} \sum_{v = 1}^{V_j} \beta_{j,v} \tx_{j,v} + e \) & 
  \( \ty = c + \sum_{j = 1}^{K} \sum_{v = 1}^{V_j} \beta_{j,v} \tx_{j,v} + e \) & 
  \( \ty = c + \sum_{v = 1}^{V_a} \phi_{v} \ta_{v} + e \) \\

  Formula in matrix form & 
      \(Y \w = c + A \pw + X \bo + \epsilon \) &
      \(Y \w = c + X \bo + \epsilon \) &
      \(Y \w = c + A \pw + \epsilon \) \\

  \midrule

  Solution (optional constraints in \color{red}{red}) &

  \( \left\{ \begin{array}{l}
    \wh = \frac{1}{\ry} \Sm_{Y}^{-1} \left( \tv_1 + \tv_2 \right)
      \color{red}{- \frac{\r_l }{\ry} \Sm_{Y}^{-1} \1_n} \\
    \oh = \left[ \bi' \Sm_{X} \bi \color{red}{+ \M_2 \brac{\tl_x \odot I_{\o}}} \color{black}{} \right]^{-1}
        \left[ \tv_3 \color{red}{- \frac{1}{2} \t1od \tl_p} \color{black}{} \right]  \\
    \ph = \left[ \wi' \Sm_{A} \wi \right]^{-1}
        \wi' \left[ \Sm_{AY} \w - \Sm_{AX} \bo \right] \\
    \bh = \left[ \oi' \Sm_{X} \oi \right]^{-1}
          \oi' \left[ \Sm_{XY} \w - \Sm_{XA} \pw \right] \\
    \ry = \frac{ \left( n \w \color{red}{ - l_y \1_n } \color{black}{} \right)' \left( \tv_1 + \tv_2 \right) }
                {n \s2_y \color{red}{- l_y \1_n' \Sm_{Y} \w}} \\
    \color{red}{\r_l = \frac{1}{n} \left[ \1_n' \left( \tv_1 + \tv_2 \right) -
                              \ry \, \1_n' \Sm_{Y} \w \right] }\\
    \color{red}{\tl_x = \left[ \M_1 \V - \P \left( \t1od \right)' \M_2 \od \right]^{-1}
      \left( \od \M_1 - \t1od \P \right)' \left( \tv_3 - \tv_4 \right) } \\
    \color{red}{\tl_p = 2 \M_1^{-1} \left( \t1od \right)' 
            \left[ \left( \tv_3 - \tv_4 \right) - \M_2 \od \tl_x \right] }
    \end{array} \right.
  \) &

  \( \left\{ \begin{array}{l}
    \wh = \frac{1}{\ry} \Sm_{Y}^{-1} \Sm_{YX} \bo
      \color{red}{- \frac{\r_l }{\ry} \Sm_{Y}^{-1} \1_n} \\
    \oh = \left[ \bi' \Sm_{X} \bi \color{red}{+ \M_2 \brac{\tl_x \odot I_{\o}}} \color{black}{} \right]^{-1}
        \left[ \tv_3 \color{red}{- \frac{1}{2} \t1od \tl_p} \color{black}{} \right]  \\
    \bh = \left[ \oi' \Sm_{X} \oi \right]^{-1} \oi' \Sm_{XY} \w \\
    \ry = \frac{ \left( n \w \color{red}{ - l_y \1_n } \color{black}{} \right)' \Sm_{YX} \bo }
                {n \s2_y \color{red}{- l_y \1_n' \Sm_{Y} \w}} \\
    \color{red}{\r_l = \frac{1}{n} \left[ \1_n' \left( \tv_1 + \tv_2 \right) -
                              \ry \, \1_n' \Sm_{Y} \w \right] }\\
    \color{red}{\tl_x = \left[ \M_1 \V - \P \left( \t1od \right)' \M_2 \od \right]^{-1}
      \left( \od \M_1 - \t1od \P \right)' \left( \tv_3 - \tv_4 \right) } \\
    \color{red}{\tl_p = 2 \M_1^{-1} \left( \t1od \right)' 
            \left[ \left( \tv_3 - \tv_4 \right) - \M_2 \od \tl_x \right] }
  \end{array} \right.
  \) &

  \( \left\{ \begin{array}{l}
    \wh = \frac{1}{\ry} \Sm_Y^{-1} \tv_1 \color{red}{- \frac{\r_l}{\ry} \1_n } \\
    \ph = \left[ \wi' \Sm_{A} \wi \right]^{-1} \wi' \Sm_{AY} \w \\
    \ry = \frac{ n \w' \tv_1 \color{red}{- l_y \1_n' \tv_1} }
                {n \s2_y \color{red}{- l_y \1_n' \Sm_{Y} \w} } \\
    \color{red}{\r_l = \frac{1}{n} \left[\1_n' \tv_1 - \ry \, \1_n' \Sm_{Y} \w \right]}
  \end{array} \right.
  \) \\

  \midrule

  Shorthand notations &

  \( \begin{array}{l}
    \1_{\o} \text{ a column vector of ones with the same length and block structure as } \o \\
    \tv_1 = \left[
        \pi' \Sm_{AY} +
        \Sm_{YA} \pi -
        \pi' \Sm_{A} \pi
    \right] \w \\
    \tv_2 = \left[\Sm_{YX} - \pi' \Sm_{AX} \right] \bo \\
    \tv_3 = \bi' \left[ \Sm_{XY} - \Sm_{XA} \pi \right] \w \\
    \tv_4 = \bi' \Sm_{X} \bio \\
    \V = \diag(\ts2_x), \, \P = \diag(\lp), \,
    \M_1 = \left( \t1od \right)' \t1od, \, \M_2 = \ui' \Smd_X \ui \\
    \end{array}
  \) &

  \( \begin{array}{l}
    \tv_3 = \bi' \Sm_{XY} \w \\
    \tv_4 = \bi' \Sm_{X} \bio \\
    \V = \diag(\ts2_x), \quad \P = \diag(\lp) \\
    \M_1 = \left( \t1od \right)' \t1od, \quad \M_2 = \ui' \Smd_X \ui \\
    \ry = 1 + \ly, \quad \r_l = \frac{\l_l}{2}
  \end{array}
  \) & 

  \( \tv_1 \) as in the LARX model \\

  \bottomrule

  &
  \rule{0pt}{40pt}\thead{LSR: Latent Shock Regression} &
  \thead{LVMR/CCA: Latent Variable Multiple Regression} &
  \thead{LAR(1)/CAA: First-order LAR Model} \\

  \toprule

  Description &

  A lead-lag regression between an observed dependent and a single
  latent explanatory. A parsimonious alternative to a traditional
  multivariate lead-lag regression model. See Section \ref{lsr}. &

  Linear regression between a latent dependent and one or more (latent
  or non-latent) explanatory variables. Each latent variable enters with
  one version (no duplicate weight vectors). See Section \ref{cca}. &

  A latent variable autoregressive model of order LAR(1). Solvable by
  Canonical Autocorrelation Analysis (CAA) when \(\Sm_Y = \Sm_A \)
  (covariance stationary proxy vector). See Section \ref{caa}.
  \\

  \midrule

  Regression formula &

  \( y_t = c + \sum_{\tau = 1}^{F} \beta_{\tau} L^{\tau} \tx_t + e_t \)
  \hspace{1.5em} where \(L\) denotes the lag operator & 

  \( \ty = c + \sum_{j = 1}^{K} \beta_{j} \tx_{j} + e \) & 

  \( \ty_t = c + \phi \ty_{t-1} + e_t \) \\

  Formula in matrix form & 

      \( y_t = c + X \bko + \epsilon_t \) \hspace{3em}
      with \( X =
      \begin{bmatrix} X_{t-1}, & X_{t-2}, & \ldots & X_{t-F} \end{bmatrix}
      \)  &

      \( Y \w = c + X \o + \epsilon \) \hspace{0.5em} where \(X\), \(\o\)
      are vector concatenations of \(\seq{X_j}\), \(\seq{\o_j}\) &

      \( Y \w = c + \phi A \w + \epsilon  \) \\

  \midrule

  Solution &

  \( \left\{ \begin{array}{l}
      \oh = \left[ \bki' \Sm_{X} \bki \right]^{-1} \bki' \Sm_{Xy} \\
      \bh = \left[ \oki' \Sm_{X} \oki \right]^{-1} \oki' \Sm_{Xy}
    \end{array} \right.
  \) &

  \makecell[l]{
   \vspace{0.2em}
   \( \left\{ \begin{array}{l}
      \wh = \frac{1}{\ry} \Sm_{Y}^{-1} \Sm_{YX} \o \\
      \oh = \Sm_{X}^{-1} \Sm_{XY} \w \\
      \ry = \frac{ \w' \, \Sm_{YX} \o }{\s2_y}
  \end{array} \right. \vspace{0.2em}
  \) \\or: \hspace{1em} \(
    \ry \wh = \Sm_{Y}^{-1} \Sm_{YX} \Sm_{X}^{-1} \Sm_{XY} \w
  \)
  \vspace{0.2em}
  } &

  \makecell[l]{
   \vspace{0.2em}
   \( \left\{ \begin{array}{l}
      \wh = \frac{1}{\ry} \Sm_Y^{-1} \phi
            \left[\Sm_{AY} + \Sm_{YA} - \phi \Sm_{A} \right] \w \\
      \phih = \frac{\w' \Sm_{AY} \w}{\w' \Sm_{A} \w} \\
      \ry = \frac{\phi }{\s2_y} \w' \left[\Sm_{AY} + \Sm_{YA} - \phi \Sm_{A}
\right] \w
  \end{array} \right. \vspace{0.2em}
  \) \\or: \hspace{1em} \(
    \phi \wh =  \frac{1}{2} \Sm_Y^{-1} \brac{ \Sm_{AY} + \Sm_{YA}} \w
  \) \hspace{1em} if \hspace{1em} \(\Sm_Y = \Sm_A \)
  \vspace{0.2em}
  }\\

  \bottomrule

  \end{tabular}

\end{threeparttable}
\end{adjustbox}
\end{table}
\end{landscape}
}
\subsection{LSR: A Parsimonious Alternative to a Traditional Lead-Lag Regression}
\label{lsr}
Just like the traditional ARX model, the LARX model is estimated by the
method of least squares applied to a collection of observed variables.
The key difference lies in how the regression coefficients are mapped to
the observed data. ARX models always assign a unique response
coefficient to each variable, whereas LARX models allow some
coefficients to enter the equation more than once as a side-effect of
the interaction between coefficient vectors and latent variable weight
vectors. As a result, LARX models will often be more parsimonious than
traditional regression models applied to the same dataset.

This difference is best exemplified by a class of models in which an
observed dependent variable \(y\) enters as a function of \(V\) versions
of a single latent explanatory variable \(\tx\) estimated using a \(\brac{1 \times m}\) vector of observed proxy variables, \(X\). In a
time series context, the version iterator \(v\) is more intuitively
described as a lag iterator \(\tau\) relative to a time subscript \(t\),
so the corresponding regression problem becomes:

\begin{equation}\label{eqn:lsr_a}
\begin{split}
& y_t = c + \sum_{\tau = 1}^{F} \beta_{\tau} \tx_{t-\tau} + e_t
    \quad \text{ or, equivalently:} \\
& y_t = c + \sum_{\tau = 1}^{F} \beta_{\tau} L^{\tau} \tx_{t} + e_t
    \quad \text{where } L \text{ is the lag operator}
\end{split}
\end{equation}

As before, the latent variable \(\tx\) is approximated as \(\tx_t = X_t
\o - r_t\) using a time-invariant weight vector \(\o\) with dimensions
\(\brac{m \times 1}\) and a measurement error term \(r_t\). In matrix
form, this reduces to:

\begin{equation} \label{eqn:lsr_b}
y_t = c + X \bko + \epsilon_t
\end{equation}

where \(X := \begin{bmatrix} X_t, & X_{t-1}, & X_{t-2}, & \ldots &
X_{t-F} \end{bmatrix}\) is a \(\brac{1 \times mF}\) vector of observed
variables which contains all the lag powers of the proxy vector \(X_t\),
and \(\b\) has dimensions \(F \times 1\). This paper will henceforth
refer to (\ref{eqn:lsr_b}) a Latent Shock Regression (LSR).

The unconstrained solution for the LSR model is given by:

\begin{subnumcases}{\hspace{-15em}\label{eqn:lsr_solution_full}}
\oh = \left[ \bki' \Sm_{X} \bki \right]^{-1} \bki' \Sm_{Xy}
    \label{eqn:lsr_fp_1} \\[10pt]
\bh = \left[ \oki' \Sm_{X} \oki \right]^{-1} \oki' \Sm_{Xy}
     \label{eqn:lsr_fp_2}
\end{subnumcases}

\begin{equation} \label{eqn:lsr_solution_const}
\hspace{-26em} \ch = \overline{\y} - \overline{\X} \bko
\end{equation}

One can look at the LSR model in two ways. On the one hand, it is an SDM
with \(F\) lags of a single latent factor. On the other hand, it is a
multivariate lead-lag regression with \(F\) lags of \(m\) observed
explanatory variables. The latter interpretation makes LSR directly
comparable to a traditional lead-lag regression model, which reveals one
key difference: In a standard model, each lag of each regressor would be
mapped to a unique regression coefficient, resulting in \(mF\) slope
parameters. Meanwhile, the LSR model only requires \(F\) coefficients
for the vector \(\b\) and \(m\) coefficients for the vector \(\o\),
resulting in \(F + m\) slope parameters. This difference can lead to a
substantial divergence in model complexity when both \(F\) and \(m\) are
large.

The gain in parsimony achieved by LSR models comes from one key
simplifying assumption. In LSR models, all the observed explanatory
variables are treated as facets (``proxies'') of a single unobserved
shock process \(\tx\). As a result, all observed explanatory variables
are assumed to affect the dependent with a shared lag profile: the lag
profile of \(\tx\) given by the \(\brac{F \times 1}\) slope vector
\(\b\). Meanwhile, the weights of the observed variables in \(\tx\) are
time-invariant and given by the \(\brac{m \times 1}\) vector \(\o\).
What would be an \(\brac{mF \times 1}\) coefficient vector in a
traditional lead-lag regression is then replaced by an \(\brac{mF \times
1}\) Kronecker product \(\brac{\b \otimes \o}\), and the slope
coefficient for lag \(\tau\) of predictor \(i\) is given by the product
of the \(i\)'th element of \(\o\) and the \(\tau\)'th element of \(\b\).

This limiting assumption can lead to a reduction in the quality of
in-sample regression fit for LSR models compared to traditional lead-lag
regression models. On the other hand, it can also reduce the risk of
overfitting when the assumption about homogenous lag profiles is
realistic. Furthermore, the structural constraint of the LSR model can
be relaxed by slicing the proxy vector \(X_t\) into blocks and
constructing multiple latent shock processes \(\tx_j\) using a
block-wise Kronecker product \(\bo\). Multivariate LSR (MLSR)
specifications can provide additional flexibility in terms of the number
of unique lag profiles of transmission (and, accordingly, the number of
required regression parameters), serving as a middle ground between pure
(univariate) LSR and traditional lead-lag regression models\footnote{Given a \(\brac{1 \times m}\) vector of observed regressors
\(X_t\), a multivariate LSR model would reduce to a univariate LSR model
when \(X_t\) has one block and to a traditional lead-lag regression
model when \(X_t\) has \(m\) blocks.}.
\subsection{LVMR: A Regression Model Solvable by Canonical Correlation Analysis}
\label{cca}
Section \ref{sdf} argues that the LARX model subsumes Canonical Correlation
Analysis (CCA). To demonstrate this, let us consider a subcategory of
LARX models in which each latent variable enters the equation with
exactly one version (no duplicate weight vectors). We can broadly
categorise these models as Latent Variable Multiple Regression (LVMR)
models because of their resemblance to the class of models examined in
\citep{Burnham-MacGregor-1996}. LVMR models have two simplifying
features compared to the full LARX specification: First, there are no
autoregressive lags and hence no vector \(\p\). Second, because each
latent explanatory variable only appears once, the entire vector \(\b\)
becomes redundant\footnote{Formally, if each \(\tx_j\) only has one version, then the
corresponding block \(\b_j\) of coefficient vector \(\b\) only has one
element. Each term \(\X_j \brac{ \b_j \otimes \o_j}\) then reduces to
\(X_j \o_j \beta_j\) where \(\beta_j\) is a scalar. If there is no
scaling constraint imposed on \(\o_j\), \(\beta_j\) becomes redundant as
it can be ``absorbed'' into the solution for \(\o_j\). If there is a
scaling constraint on \(\o_j\), then it can be imposed by multiplying
the unconstrained solution by the appropriate scalar value after the
fact and \(\beta_j\) would represent that scalar value.}. With these simplifications, the regression
formula reduces to:

\begin{equation}\label{eqn:lvmr_b}
Y \w = c + X \o + \epsilon
\end{equation}

The solution to this problem is given by:

\begin{subnumcases}{\hspace{-27em}\label{eqn:lvmr_solution}}
\wh = \frac{1}{\ry} \Sm_{Y}^{-1} \Sm_{YX} \o
   \label{eqn:lvmr_fp_1} \\[10pt]
\oh = \Sm_{X}^{-1} \Sm_{XY} \w
   \label{eqn:lvmr_fp_2} \\[10pt]
\ry = \frac{ \w' \, \Sm_{YX} \o }{\s2_y}
     \label{eqn:lvmr_fp_3}
\end{subnumcases}

\begin{flalign} \label{eqn:lvmr_solution_const}
& \ch = \overline{\Y} \w - \overline{\X} \o &
\end{flalign}

LVMR models sit at the cusp between LARX, CCA and traditional multiple
linear regression models. Firstly, equation (\ref{eqn:lvmr_fp_2})
represents the least squares solution for a multiple linear regression
between \(\ty\) and the individual observed variables in \(X\)
(semantically, the vector \(\o\) could just as well be called \(\b\),
and whether any blocks of \(\o\) represent LV weight vectors for some
latent variables \(\tx_j\) is a question of interpretation only). As a
corollary, an LVMR model reduces to a standard multiple regression model
when \(\ty\) is non-latent. Secondly, solving (\ref{eqn:lvmr_solution}) is
equivalent to finding the canonical variates for \(Y\) and \(X\), as
previously observed by \citep{Dong-2018}. If we substitute
(\ref{eqn:lvmr_fp_2}) into (\ref{eqn:lvmr_fp_1}) and set \(\s2_y = 1\),
we obtain:

\begin{equation} \label{eqn:cca_fp}
\wh = \frac{\Sm_{Y}^{-1} \Sm_{YX} \Sm_{X}^{-1} \Sm_{XY} \w}
         {\w' \Sm_{Y}^{-1} \Sm_{YX} \Sm_{X}^{-1} \Sm_{XY} \w}
\end{equation}

which is the mathematical formula for CCA.
\subsection{Latent Variable Autoregression (LAR) and Canonical Autocorrelation Analysis (CAA)}
\label{caa}
Another interesting special case of the LARX model is a latent variable
autoregressive model with no exogenous inputs (LAR), which can be viewed
as an SDM counterpart of the traditional autoregressive model. This
section briefly discusses the LAR model and its own special case: a
latent autoregressive model of order LAR(1). With a few simplifying
assumptions, the LAR(1) model reduces to a new type of
eigendecomposition problem which is similar to PCA and CCA.

LAR models can be defined by stripping away the exogenous term from
equation (\ref{eqn:larx_b}), which results in the following formula:

\begin{equation} \label{eqn:clar}
Y \w = c + A \pw + \epsilon
\end{equation}

The fixed point solution to this problem is given by:

\begin{subnumcases}{\hspace{-14em}\label{eqn:clar_solution_full}}
\wh = \frac{1}{\ry} \Sm_Y^{-1} \brac{ \tv_1 - \r_l \1_n }
   \label{eqn:clar_fp_1}  \\[10pt]
\ph = \left[ \wi' \Sm_{A} \wi \right]^{-1} \wi' \Sm_{AY} \w
     \label{eqn:clar_fp_2} \\[10pt]
\ry = \frac{ \left( n \w - l_y \1_n \right)' \tv_1 }
             {n \s2_y - l_y \1_n' \Sm_{Y} \w}
     \label{eqn:clar_fp_3} \\[10pt]
\r_l = \frac{1}{n} \left[
    \1_n' \tv_1 - \ry \, \1_n' \Sm_{Y} \w
\right] \label{eqn:clar_fp_4}
\end{subnumcases}
\begin{flalign*}
& \text{with } \tv_1 = \left[
    \pi' \Sm_{AY} +
    \Sm_{YA} \pi -
    \pi' \Sm_{A} \pi
\right] \w &
\end{flalign*}

Similar methodologies have been examined by papers from other
disciplines\footnote{For example, \citep{Dong-2018} examines a similar type of model
under the name DiCCA (dynamic inner CCA), while \citep{Qin-2021}
proposes a LaVAR (latent vector autoregression) algorithm for achieving
a full canonical decomposition of the latent autoregressive structure in
\(Y\) allowing for interactions. First-order LAR models also bear a
strong resemblance to Min/Max Autocorrelation Factors (MAF) first
introduced in \citep{Switzer-1984} and since popularised in the
geosciences. A thorough comparison between these models is left to
future research.}. The main use case of these models outside of
economics of finance is the decomposition of multivariate sensor data
into time-persistent signals on the one hand, and serially uncorrelated
white noise on the other. In the context of investment management, the
same concept can be applied to derive trend following investment
strategies such as price momentum. Suppose the vector \(Y\) represents
the returns on a collection of investible assets, and \(Y \w_i\)
represents the return on an investment strategy characterised by capital
allocation weights \(\w_i\). The fixed point defined by
(\ref{eqn:clar_solution_full}) would then produce a vector of capital
allocation weights \(\wh\) corresponding to an investment strategy with
the strongest possible price momentum or reversal signal, i.e., the
strongest possible correlation between past returns and future returns
in absolute terms, based on a sample of historical returns \(\Y\).

A more refined result is obtained by examining the simplest type of LAR
problem, namely, a first-order autoregressive model of the form \(\ty_t
= c + \phi \ty_{t-1} + e_t\) with \(\ty_t = Y \w + r_y\) and \(\ty_{t-1}
= A \w + r_a\). In this case \(\p\) reduces to a scalar \(\phi\) and the
unconstrained solution becomes:

\begin{subnumcases}{\hspace{-19em}\label{eqn:lar1_solution_full}}
\wh = \frac{1}{\ry} \Sm_Y^{-1} \phi
      \left[\Sm_{AY} + \Sm_{YA} - \phi \Sm_{A} \right] \w
   \label{eqn:lar1_fp_1}  \\[10pt]
\phih = \frac{\w' \Sm_{AY} \w}{\w' \Sm_{A} \w}  
     \label{eqn:lar1_fp_3} \\[10pt]
\ry = \frac{\phi }{\s2_y} \w' \left[
    \Sm_{AY} +
    \Sm_{YA} -
    \phi \Sm_{A}
\right] \w
     \label{eqn:lar1_fp_2}
\end{subnumcases}

If we further assume that \(Y\) is covariance stationary, i.e., \(\Sm_Y
= \Sm_A\), \(\phih\) reduces to \(\frac{\w' \Sm_{AY} \w}{\s2_y}\) and
\(\ry\) can be refactored as:

\begin{flalign*}
& \ry = \frac{\phi }{\s2_y} \w' \left[
          \Sm_{AY} + \Sm_{YA} - \phi \Sm_{A}
        \right] \w
      = \frac{\phi }{\s2_y} \left[
            \w' \Sm_{AY} \w + \w' \Sm_{YA} \w - \phi \w' \Sm_{A} \w
        \right] & \\
& \quad = \phi \left[
            \frac{\w' \Sm_{AY} \w}{\s2_y} + \frac{\w' \Sm_{YA} \w}{\s2_y}
            - \phi \frac{\w' \Sm_{A} \w}{\s2_y}
          \right]
        = \phi \left[ \phi + \phi - \phi \frac{\s2_y} {\s2_y} \right]
        = \phi^2 &
\end{flalign*}

The solution for \(\w\) then becomes:

\begin{equation} \label{eqn:w_eigenvec_lar1}
\phih \wh = \frac{1}{2} \Sm_Y^{-1} \left[ \Sm_{AY} + \Sm_{YA} \right] \w 
\end{equation}

or, equivalently:

\begin{equation} \label{eqn:w_eigenvec_lar1_b}
\phih \wh = \frac{1}{2} \left[ \Sm_A^{-1} \Sm_{AY} + \Sm_Y^{-1} \Sm_{YA} \right] \w
\end{equation}

In other words, if the proxy vector \(Y\) is covariance stationary, a
LAR(1) model specified by (\ref{eqn:lar1_solution_full}) can be estimated by
the eigendecomposition of the matrix \(\frac{1}{2} \Sm_Y^{-1}
\brac{\Sm_{AY} + \Sm_{YA}}\), with each \(\wh\) being its eigenvector
and \(\phih\) the matching eigenvalue. An appropriate name for this
methodology would be ``Canonical Autocorrelation Analysis''\footnote{The term Canonical Autocorrelation Analysis was previously used
in \citep{Chen-2015} to describe a different type of statistical model.
However, given that the paper is in a different field, applying the same
term to equation (\ref{eqn:w_eigenvec_lar1}) should not invite any ambiguity.} (CAA)
because it produces a full canonical decomposition of the directions of
first-order autocorrelation in a group of covariance-stationary time
series variables \(Y\).

All components of equation (\ref{eqn:w_eigenvec_lar1}) have reasonably
intuitive interpretations. The matrices \(\Sm_A^{-1} \Sm_{AY}\) and
\(\Sm_Y^{-1} \Sm_{YA}\) are basically just matrices of least squares
regression coefficients: The first (second, third, \ldots{}) column of
\(\Sm_A^{-1} \Sm_{AY}\) is a vector of OLS coefficients from a
regression of the first (second, third, \ldots{}) variable in \(Y\) on all
the variables in \(A\). For the matrix \(\Sm_Y^{-1} \Sm_{YA}\) the roles
are reversed, i.e., its columns represent the OLS coefficient vectors
obtained from regressing each variable in \(A\) on all variables in
\(Y\). The coefficient \(\phi\) is both the autocorrelation coefficient
of \(\ty\) and the eigenvalue of \(\frac{1}{2} \Sm_Y^{-1} \brac{
\Sm_{AY} + \Sm_{YA}}\), which means that the first (last) eigenvector
produces the strongest (weakest) absolute autocorrelation signal.

In an investment management context, equation (\ref{eqn:w_eigenvec_lar1}) can
be used to extract momentum and reversal patterns from a collection of
investible assets. Let \(Y\) and \(A\) represent the returns on an
investment opportunity set at times \(t\) and \(t-1\), respectively.
Each eigenvector \(\wh_i\) of the matrix \(\Sm_Y^{-1} \brac{\Sm_{AY} +
\Sm_{YA}}\) would be a vector of capital allocation weights. The
investment strategies produced by all \(\wh_i\) would capture all the
distinct directions of autocorrelation in \(Y\). The dominant
eigenvectors would produce investment strategies with the strongest
momentum or reversal signals, while the last eigenvectors would produce
strategies whose returns are closest to a random walk. Such a collection
of investment portfolios could be useful for a range of tasks, including
identifying market inefficiencies, predicting asset returns, and pricing
systematic investment risks.
\section{Empirical Application: Stock Markets and Economic Activity in the US}
\label{empirical_results}
As an example of how the LARX model can be used in the real world, let
us briefly examine the relationship between equity market performance
and real economic activity in the United States. A good starting point
for this analysis is provided by \citep{Ball-2021} who find that
de-trended levels of the S\&P 500 index have in-sample predictive power
over the de-trended levels of real US GDP.

The relationship between stock returns and real economic activity has a
strong foundation in economic theory. For example, the Consumption-based
Capital Asset Pricing Model (CCAPM) (\citep{Lucas-1978,Breeden-1979})
and the Investment CAPM (ICAPM) (see \citep{Zhang-2017} and the
references therein) put forward consumption and investment,
respectively, as the main sources of priced risk in the equity market.
With both consumption and investment also being key components in the
expenditure model of GDP, the relationship between stock returns and GDP
growth follows naturally.

In reality, however, market aggregates and macroeconomic aggregates are
designed to measure different things. Companies in the S\&P 500 are
weighted based on their market capitalisation rather than their relative
contributions to the real economy. US GDP is designed to measure
broad-based economic activity in the US -- not just that of the Fortune
500 companies. These differences suggest that the true strength of the
relationship between stock performance and real economic output would be
underestimated by a model linking the S\&P 500 to real US GDP.

The LARX model can be used to address this discrepancy. Both the S\&P 500
and US GDP are composite measures, which means they can be broken down
into their constituent parts and reassembled into latent measures of
market performance and economic growth, respectively. We can use the
five expenditure components of US GDP as proxy variables for a
supervised diffusion index (SDI) of US real economic activity (a ``real
activity SDI''), and 10 GICS\footnote{\href{https://www.msci.com/indexes/index-resources/gics}{Global Industry Classification Standard}} level 1 sector sub-indices\footnote{As of 2016, listed real estate (RE) was added as the eleventh
GICS level 1 sector of the S\&P 500. The RE sector is excluded from this
study for two reasons: Firstly, its data history only starts in Q4 2001
and would reduce the sample size from 138 quarterly observations to 90.
Secondly, the RE sector only has a 2.25\% weight in the S\&P 500 as of
April 2025 -- the second lowest weight after materials at 1.99\%.} of
the S\&P 500 as proxy variables for an SDI of market growth expectations
(a ``market SDI''). The component weights of the SDI measures can then
be compared to the component weights in the official aggregates to
explore whether the relationship between the S\&P 500 and real US GDP is
distorted by a misalignment in sector composition, the relative
importance of different sources of demand, or both.
\subsection{Data and Methodology}
\label{data_and_methodology}
Historical data for US GDP and its expenditure components are retrieved
from the Economic Database of the Federal Reserve Bank of St. Louis
(``FRED''). Historical index levels for the S\&P 500 and its GICS level 1
sector constituents are retrieved from Investing.com. A full data
reference can be found in Table \ref{table:data_reference}.

Empirical results from \citep{Ball-2021} suggest that the
best-performing model for real US GDP at time \(t\) contains the S\&P 500
at times \(t\) to \(t-3\) and two autoregressive lags, obtaining a
reported adjusted R-squared of 66.61\% for a quarterly data sample
between Q1 1999 and Q4 2020. Let us take the same specification as the
baseline but make three noteworthy changes to the experiment design:
First, model performance is going to be measured out of sample, rather
than in-sample. Second, log-percent changes will be used instead of
de-trended levels, which is a more common approach to measuring both
economic activity and stock market performance. Third, the sample period
will cover Q4 1989 to Q3 2025, capturing the full available data history
as of the time of writing\footnote{The main constraint on history length is the GICS classification
for the 10 original S\&P 500 sectors which was only introduced in 1999
with the data back-dated to 1989.}. Revised estimates will be used for all
economic aggregates following the tentative finding in \citep{Ball-2021}
that the link is stronger between equity performance and revised GDP
numbers as opposed to point-in-time (``vintage'') releases. A total of
four regression models will be compared:

\begin{subequations} \label{eqn:regression_models}
Baseline OLS/ARX model: Real GDP growth \(g\) vs S\&P 500 returns \(r\):
\begin{equation} \tag{\ref*{eqn:regression_models}}
g_t = c + \sum_{\tau = 1}^2 \phi_{t-\tau} g_{t-\tau} +
      \sum_{\tau = 0}^3 \beta_{t-\tau} r_{t-\tau} + e
\end{equation}
LARX model a): Latent explanatory: Real GDP growth vs market SDI returns \(\tr\):
\begin{equation}\label{eqn:regression_g_nonlatent}
g_t = c + \sum_{\tau = 1}^2 \phi_{t-\tau} g_{t-\tau} +
      \sum_{\tau = 0}^3 \beta_{t-\tau} \tr_{t-\tau} + e
\end{equation}
LARX model b): Latent dependent: Change in the real activity SDI \(\tg\) vs S\&P 500 returns:
\begin{equation}\label{eqn:regression_r_nonlatent}
\tg_t = c + \sum_{\tau = 1}^2 \phi_{t-\tau} \tg_{t-\tau} +
      \sum_{\tau = 0}^3 \beta_{t-\tau} r_{t-\tau} + e
\end{equation}   
LARX model c): Both latent: Change in the real activity SDI vs market SDI returns:
\begin{equation}\label{eqn:regression_all_latent}
\tg_t = c + \sum_{\tau = 1}^2 \phi_{t-\tau} \tg_{t-\tau} +
      \sum_{\tau = 0}^3 \beta_{t-\tau} \tr_{t-\tau} + e
\end{equation}
\end{subequations}

For a cleaner comparison, variance constraints are imposed on the market
SDI and the real activity SDI to mimic the full-sample variance of the
S\&P 500 and real GDP growth, respectively. Although this information is
only available with the benefit of hindsight, the choice of variance
constraint does not affect the performance of the LARX model as it is
scale invariant. All regressions use exponentially decaying sample
weights with a half-life of 10 years, in order to allow for some drift
in the regression parameters. The COVID lockdown period of Q2 and Q3
2020 is treated as a statistical outlier and removed from the data
sample\footnote{US GDP shows a contraction of 8.2\% in Q2 2020 followed by a 7.5\%
rebound in Q3 2020 -- a -13.8 sigma event and a 12.6 sigma event,
respectively, based on the standard deviation of quarterly US GDP growth
excluding these two quarters.}. A minimum of 40 degrees of freedom is set as a requirement
for producing a forecast, which corresponds to 10 years of quarterly
data on top of one data point lost to each estimated coefficient,
including Lagrange multipliers. An additional three data points are lost
to the lag operator and one to the percent change calculation.
Ultimately, forecast coverage starts in Q3 2002 for the baseline model
(longest) and in Q3 2006 for the model with all latent variables
(shortest).
\subsection{Out-of-Sample Forecasting Performance}
\label{sec:org885f59d}

Figure \ref{NEW_fig-fc_ols_vs_clarx} plots the rolling out-of-sample (OOS)
predictions made by the baseline model (top left), as well as LARX
models a) (top right), b) (bottom left), and c) (bottom right). Each
plot overlays the actual values of the dependent (real GDP growth in the
top row, the real activity SDI in the bottom row), as well as the naïve
forecast for the dependent based a rolling sample mean (a.k.a. the
``benchmark''). The grey text boxes report each model's OOS R-squared
statistic as defined in \citep{Campbell-2007}.

\begin{figure}[ht]
    \includegraphics[width=\textwidth]{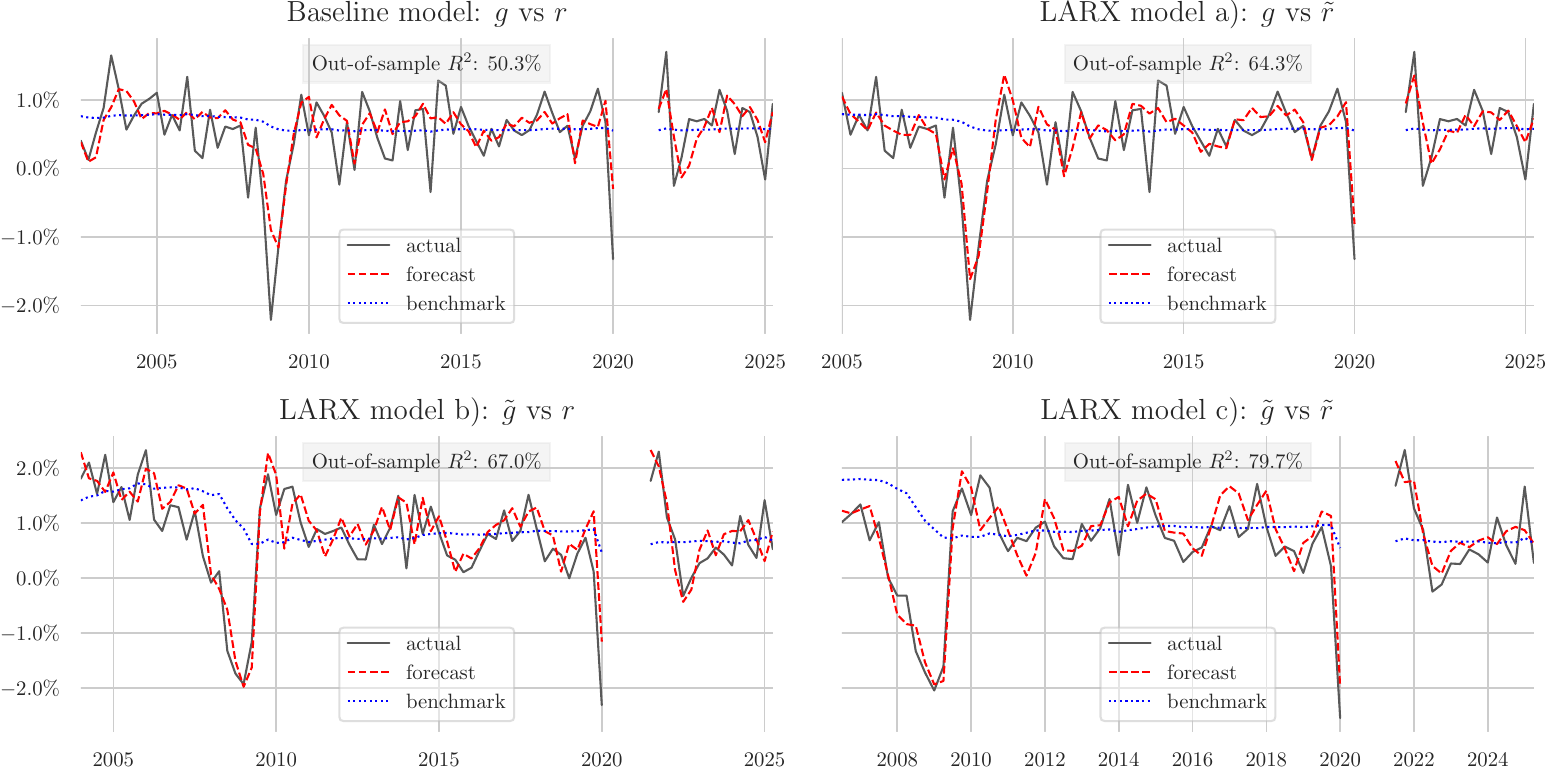}
    \caption{Out-of-sample forecasting performance} \label{NEW_fig-fc_ols_vs_clarx}
    \floatfoot{ \footnotesize %
      Note: Rolling out-of-sample forecasting performance of the four
      models specified by equations
      (\ref{eqn:regression_models})-(\ref{eqn:regression_all_latent}).
      The target variance for the real activity SDI is set to the
      full-sample variance of quarterly US real GDP growth (approx.
      0.000124). The variance of the market SDI is set to the
      full-sample variance of the S\&P 500 quarterly log-return
      (approx. 0.006348).
    }
\end{figure}

The baseline model from \citep{Ball-2021} (top left) does well despite
the changes to the experiment design. It achieves an OOS R-squared of
50.3\% -- reasonably close to the in-sample adjusted R-squared of 66.61\%
reported in the original study.

All three LARX specifications further improve on the baseline model.
LARX model a) (market SDI and real GDP growth -- top right) yields an
OOS R-squared of 63.9\%. LARX model b) (the S\&P 500 and the real activity
SDI -- bottom left) produces an OOS R-squared of 67.0\%. LARX model c)
(marked SDI and real activity SDI -- bottom right) performs best with an
OOS R-squared of 79.7\%.

For completeness, it must be noted that the dependent variable in models
b) and c) is different from the dependent variable in models a) and b).
However, the comparison is still meaningful for at least three reasons.
First, the variance of the real activity SDI matches the variance of
real US GDP growth by design. Second, the goal of the study is to assess
the importance of a common unobserved factor in both the stock market
and the real economy, which is a bi-directional problem requiring
optimisation on both ends. Third, rearranging equations
(\ref{eqn:regression_models})-(\ref{eqn:regression_all_latent}) to use
the stock market return as the dependent yields a largely similar set of
results, as reported in \ref{empirical_results_rev}.
\subsection{Explaining the Outperformance of the LARX Models}
\label{sec:orgef6b772}

We can dive deeper into the empirical results for additional insight.
Overall, LARX models a) and b) both outperform the baseline model by
roughly the same amount, while LARX model c) yields approximately the
same improvement as LARX models a) and b) combined. This suggests three
things: First, the sector composition of the market SDI better reflects
the sector composition of the real economy. Second, the expenditure
composition of the real activity SDI better reflects the main sources of
demand for the output of Fortune 500 companies. Third, sector
composition and expenditure composition represent two equally important
yet distinct sources of discrepancy between the stock market and the
real economy, as fixing one has little to no bearing on the extent of
other.

To gain a better understanding of the underlying dynamics, let us look
at the weight vectors of the SDI measures obtained from LARX model c)
(the best-performing specification) side by side with their non-latent
counterparts.

\begin{figure}[ht]
    \includegraphics[width=\textwidth]{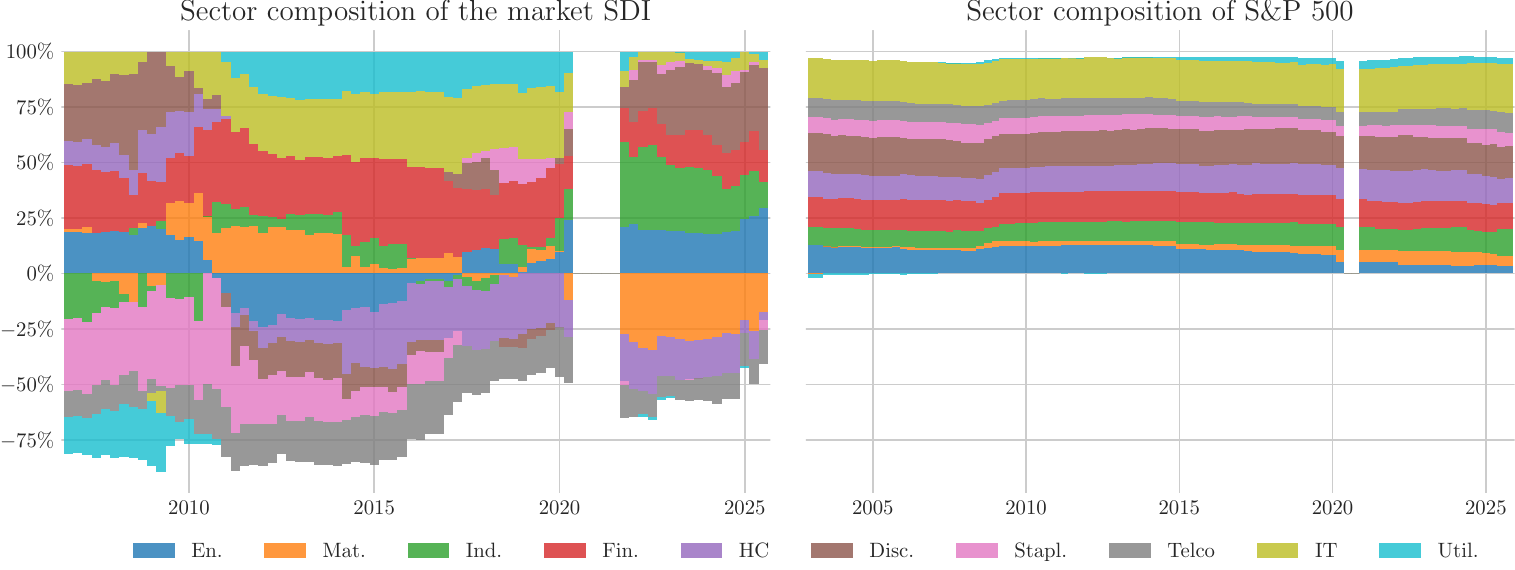}
    \caption{Sector composition: Latent Measure of Market Growth Expectations vs the S\&P 500}
    \label{NEW_fig-weight_evolution_spx}
\end{figure}

Figure \ref{NEW_fig-weight_evolution_spx} compares the sector weights of the
market SDI (left) with the sector weights of the S\&P 500 (right)
approximated using a rolling regression of S\&P 500 returns on the
coincident returns of the sector sub-indices. The weight vector of the
market SDI is scaled so that the positive weights add up to 100\%.

Here, two observations jump out. First, sector rotations, a.k.a.
relative sector performance, comprise between 40\% and 80\% of the market
SDI at any given point in time. This suggests that sector rotations are
about as useful for forecasting real economic activity as is the overall
direction of the equity market. Sector rotations have been previously
examined in the investment management literature for the purposes of
capturing systematic risk premia (see \citep{Lamponi-2014}). A link
between sector rotations and the business cycle has been examined for
the purposes of constructing systematic investment strategies by
\citep{Molchanov-2024}, albeit with limited success. To the author's
best knowledge, no previous study demonstrates that sector rotations
have out-of-sample predictive power with respect to real economic
growth.

Second, the composition of the market SDI fluctuates quite strongly over
time, and many of these fluctuations seem to correspond to major
economic trends or events. For example, the performance of the energy
(``En.'') sector changes its economic meaning after the introduction of
major green energy initiatives by the Obama administration in 2009, and
again around the commodity super-cycle of 2015-2016. Similarly, the
healthcare sector (``HC'') has a positive weight until the introduction
of the Affordable Care Act (a.k.a. ``Obamacare'') in 2010 and a negative
weight thereafter.

\begin{figure}[ht]
    \includegraphics[width=\textwidth]{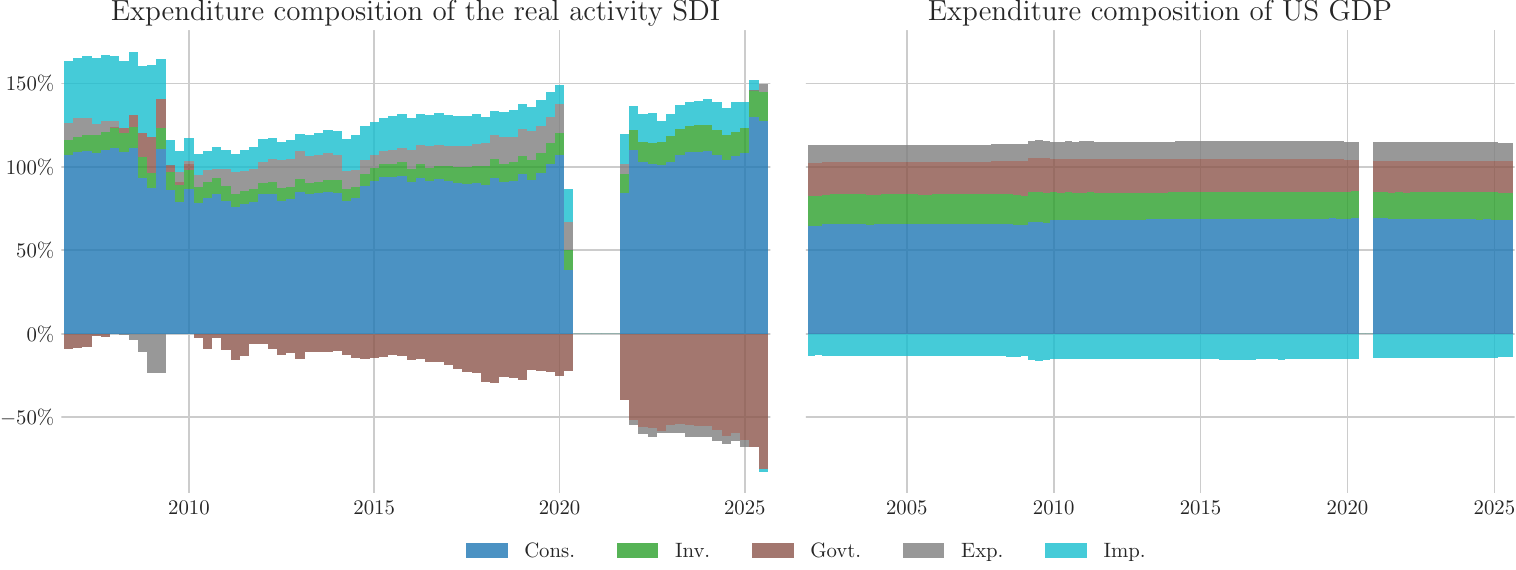}
    \caption{Expenditure composition: Latent Economic Activity Measure vs US GDP}
    \label{NEW_fig-weight_evolution_gdp}
\end{figure}

Figure \ref{NEW_fig-weight_evolution_gdp} plots the evolution of the expenditure
weights in the real activity SDI (left) against the evolution of the
expenditure weights in real US GDP (right).

Both measures assign the largest weight to consumer spending
(``Cons.''), which is an expected result for a consumption-led economy
like the United States. The role of private investment (``Inv.'') is
smaller and largely similar for both aggregates.

The biggest differences lie in the role of government spending
(``Govt.'') and the trade balance (``Imp.'', ``Exp.''). In the real
activity SDI, stronger government spending is associated with weaker
equity market performance, and vice versa, outside of the immediate
aftermath of the 2008 crisis. In other words, equity investors seem to
welcome fiscal expansion in a crisis, but view it as a negative in
normal times, a.k.a. the Keynesian notion of fiscal policy as a
counter-cyclical buffer. The effect becomes more pronounced after the
outbreak of COVID-19, perhaps owing to a ballooning US fiscal deficit in
2020 and a sharp rise in interest rates since 2021.

Lastly, the interplay between the trade balance and the equity market
seems to be much more complex than the role of imports and exports in
the national accounts. \citep{Reinbold-2019} offers a brief primer on
the history of the US trade balance across the value chain at different
stages of the country's industrialisation. From an accounting
perspective, imports subtract from GDP while exports add to it, both
with a multiplier of 1. However, today many large US companies have
manufacturing facilities in other countries, which means that their
products need to be imported before they are sold domestically, but they
don't need to exported to be sold abroad. As a result, the link between
stock returns and exports is weakened, whereas the link between stock
returns and imports becomes positive inasmuch as imports become a
leading indicator of domestic sales. The real activity SDI captures this
relationship by assigning a time-varying weight to US exports and a
consistently positive weight to US imports, with the exception of the
latest quarter in the sample (3Q 2025) characterised by major swings in
US tariff policy.
\subsection{Questions for Future Research}
\label{sec:org57fb151}

A more thorough examination of the relationship between stock returns
and real economic activity is beyond the scope of this paper; however,
several avenues of further investigation may be of interest to future
research. Three examples are listed below.

First, further insight may be gained by constructing the market SDI and
the real activity SDI at higher levels of granularity. The market SDI
can be estimated using GICS level 2 indices or even single stocks, while
the real activity SDI can go all the way down to the itemised national
accounts. Of course, the benefits of a more granular approach should be
weighed against a higher risk of overfitting.

Second, this paper's findings are not examined from an asset pricing
standpoint. No attempts are made to test whether the market SDI produced
by the LARX model is a valid asset pricing factor in the CCAPM or ICAPM
sense, a profitable investment strategy, or a consistent leading signal
for business cycle rotations.

Third, only the case of the United States is examined. It may be
informative to perform similar studies for countries with other economic
models and sector compositions, in order to see whether the relationship
between stock returns and domestic economic activity is a universal
phenomenon.
\section{Concluding remarks}
\label{concluding_remarks}
This paper proposes a new methodological framework for estimating latent
variable models called Supervised Diffusion (SDF). A fixed point
solution is derived for a new supervised diffusion model (SDM) called
LARX: a superset of the traditional autoregressive model with exogenous
inputs (ARX) in which any or all variables can be latent. The
derivations of the LARX model result in a minor contribution to the
field of matrix calculus: A block-wise direct sum operator is introduced
and applied to solve a class of Lagrangian optimisation problems with
interactions between multiple coefficient vectors in the presence of
piecemeal constraints. Several special cases of the LARX models are also
examined in more detail, including a parsimonious lead-lag regression
model called Latent Shock Regression (LSR), as well as a new canonical
decomposition technique called Canonical Autocorrelation Analysis (CAA).

In the empirical section, the LARX model is used to re-examine the link
between stock market performance and real economic activity in the
United States. The LARX model attains an out-of-sample (OOS) R-squared
of 79.7\%, compared to an OOS R-squared of 50.3\% for the baseline OLS
specification. The LARX model also provides novel insights about the
interplay between stock returns and the real economy, corroborating a
few notions which were previously best described as ``rules of thumb''.
For example, new evidence is found in support of the importance of
sector rotations in predicting economic growth, as well as the complex
and multifaceted roles of fiscal policy and international trade in the
economic value chain.

SDMs like LARX have many potential use cases in economics and finance.
For example, they can be used to obtain better estimates for variables
whose main goal is to track other processes, like diffusion indices of
business activity (e.g., \citep{Owens-2005}), surveys of consumer
sentiment (e.g., \citep{Curtin-2000}), or composite indicators of
financial stress (e.g., \citep{CISS-2012}). Supervised diffusion indices
can also replace various heuristic investment techniques, such as
``buying past winners and selling past losers'' as a means of capturing
asset price momentum as in \citep{Jegadeesh-1993}.

In the scientific process, SDMs have a place in research settings
characterised by noisy and unreliable empirical data and little scope
for controlled experiments -- a common problem in macroeconomics. SDMs
can enable the researcher to place relatively more trust in the research
hypothesis by expanding the search for its supporting evidence to
arbitrary linear combinations over the observed data series. Formally
speaking, the use SDMs reduces the risk of Type II error at the expense
of a somewhat higher chance of Type I error.

Sections \ref{sdf}-\ref{larx_special_cases} leave much scope for further
methodological work. For example, the topics of statistical significance
and feature selection\footnote{In the specific case of CCA, \citep{Bagozzi-1981} and
\citep{Ahn-2018} offer good starting points in the analysis of
statistical significance and feature selection, respectively.} were only briefly touched upon in Section
\ref{larx_ols_interpretation}. Furthermore, the LARX methodology lends itself
well to a number of adjustments which are common practice with the
traditional ARX model, including moving average errors (MA), conditional
heteroskedasticity (GARCH), seasonality patterns, various forms of
coefficient regularisation (e.g., LASSO, Ridge, Elastic Net - see, for
example, \citep{Vinod-1976}), and various covariance adjustment
techniques such as Generalised Least Squares (e.g., \citep{Aitken-1936})
and portfolio-style covariance shrinkage (e.g., see
\citep{Ledoit-Wolf-2022}).

Non-linear SDMs may warrant a closer look as well. For example, in the
context of the asset pricing theory, the LARX model can be used for
estimating risk factors such as price momentum (\citep{Jegadeesh-1993}),
earnings momentum (\citep{Barth-1999}) and company size
(\citep{Fama-French-1993}), which are governed by linear relationships.
However, it cannot be used as easily to estimate factors such as value
(e.g., \citep{Asness-2013,Fama-French-1993,Shiller-1988}) or quality
(e.g., see \citep{Hsu-2017}). Valuations are, generally speaking, ratios
of prices to fundamentals, e.g., a portfolio's price-to-earnings (PE)
valuation can be calculated as \(\frac{P \w}{EPS \w}\), where \(P\) and
\(EPS\) are vectors of company share prices and earnings per share, and
\(\w\) is a vector of capital allocation weights. A value factor SDI
would then be governed by a functional relationship similar to:

\begin{equation} \label{eqn:reg_value_factor}
Y_t \w = c + \frac{P_{t-1} \w}{EPS_{t-1} \w} + \epsilon_t
\end{equation}

where \(Y\) is a vector of constituent returns, \(c\) is the intercept,
\(\epsilon\) is the error term, and \(t\) is a time subscript. The
quality factor, on the other hand, can be based on earnings volatility
(e.g., see \citep{Dichev-2009}), which would necessitate a quadratic
SDM.

To summarise, SDMs such as LARX can be viewed as a rather natural
evolution of traditional regression analysis. At the same time, an
effective application of these models requires a slight paradigm shift
on the part of the researcher, because their input variables are
designed to change shape according to the functional relationship at
hand. This feature may not be universally useful, but it does create an
opportunity to re-examine a number of existing models and discover new
relationships in economics and finance, as well as in a number of other
data-intensive fields.

\begin{appendix}
\section{Commutativity of the blockwise direct sum operator}
\label{bds_vs_mmul}
\renewcommand{\theproposition}{\Alph{section}\arabic{proposition}}

\begin{proposition} \label{prop:bds_vs_mmul}
Let \(S\) be a set of all matrix sequences of length \(k\), and let the
sequence \(\seq{\A} \equiv \seq{\A_i | 1 \leq i \leq k}\) be an element
in \(S\). Let the matrix \(\Adsum\) represent the direct sum over the
elements in \(\seq{\A}\). For any function \(f: S \rightarrow S\), if
\(f \brac{\seq{\A}}\) can be expressed as a sequence \( \seq{\M_i \A_i}
\equiv \seq{\M_i \A_i | 1 \leq i \leq k}\) for some arbitrary sequence
of matrices \(\seq{\M_i | 1 \leq i \leq k}\), then \( f
\left(\seq{\A} \right)^{\oplus}_v = \left[ f \left( \seq{\Adsum_v} \right)
\right]_v \). If \( f \left(\seq{\A}\right) \) can be expressed as a
sequence \(\seq{\A_i \M_i} \equiv \seq{\A_i \M_i | 1 \leq i \leq k}\)
for an arbitrary sequence of matrices \(\seq{\M_i | 1 \leq i \leq k}\),
then \( f \brac{ \seq{\A} }^{\oplus}_h = \left[ f \brac{ \seq{\Adsum_h}
} \right]_h \). \end{proposition}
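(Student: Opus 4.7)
The plan is to unfold both sides of the claimed identity directly, using only the definition of $\Adsum$ as a block-diagonal assembly together with the distributivity of matrix multiplication over block concatenation. No deeper machinery is required.

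First, I would write $\Adsum_v$ explicitly as a vertical concatenation of $k$ row blocks, where the $i$-th row block is the matrix whose $i$-th column block equals $\A_i$ and whose remaining column blocks are zero. Denote this $i$-th row block by $\brac{\Adsum_v}_i$. Then $\seq{\Adsum_v} = \seq{\brac{\Adsum_v}_i | 1 \leq i \leq k}$, i.e., the sequence to which $f$ is being applied consists of zero-padded versions of the original $\A_i$. Since $\brac{\Adsum_v}_i$ and $\A_i$ have the same number of rows, they admit the same left-multipliers.

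Second, I would invoke the hypothesis that $f\brac{\seq{\A}} = \seq{\M_i \A_i}$. The same prescription applied to $\seq{\Adsum_v}$ yields $\seq{\M_i \brac{\Adsum_v}_i}$. Because matrix multiplication distributes through horizontal concatenation, left-multiplication by $\M_i$ propagates into each column block of $\brac{\Adsum_v}_i$: it leaves the zero blocks unchanged (since $\M_i \0 = \0$) and turns the $i$-th column block into $\M_i \A_i$. Stacking these row blocks vertically reproduces exactly the block-diagonal matrix $\seq{\M_i \A_i}^{\oplus}_v$, which by the hypothesis on $f$ equals $f\brac{\seq{\A}}^{\oplus}_v$. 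This establishes the vertical identity.

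Third, the horizontal case follows by a symmetric argument: $\Adsum_h$ is a horizontal concatenation of column blocks in which the $i$-th column block has $\A_i$ padded above and below by zero blocks, and right-multiplication by $\M_i$ distributes through the vertical concatenation in dual fashion. Alternatively, the horizontal case can be deduced from the vertical one by appealing to Proposition \ref{prop:bds_vs_transpose}: transposing the vertical identity interchanges $\Adsum_v$ with $\brac{\A'}^{\oplus}_h$ and turns left-multiplication by $\M_i$ into right-multiplication by $\M_i'$, producing the horizontal statement after a relabeling.

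The main obstacle is not mathematical depth but notational hygiene: one must verify that the sequence $\seq{\Adsum_v}$ to which $f$ is applied actually lies in the domain of $f$, i.e., that the prescription ``left-multiply the $i$-th entry by $\M_i$'' still makes dimensional sense when the entry is the padded row block $\brac{\Adsum_v}_i$ rather than $\A_i$ itself. This reduces to checking that $\M_i$ has the correct number of columns to act on $\brac{\Adsum_v}_i$, which is identical to the requirement for $\M_i$ to act on $\A_i$, and is therefore guaranteed by the hypothesis on $f$.
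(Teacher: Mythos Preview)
Your proposal is correct and follows essentially the same route as the paper: both arguments write $\Adsum_v$ as a vertical stack of zero-padded row blocks $\brac{\Adsum_v}_i$, apply the left-multiplication $\M_i$ blockwise (using that $\M_i\0=\0$), and observe that the result is precisely the block-diagonal matrix $\seq{\M_i\A_i}^{\oplus}_v$, with the horizontal case handled symmetrically. The only minor addition in your outline is the optional shortcut for the horizontal case via Proposition~\ref{prop:bds_vs_transpose}, which the paper does not use but which is perfectly valid.
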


\begin{proof}
For a block matrix \(\A\), take \(\seq{\A}\) to denote the sequence of
the blocks in \(\A\). For a sequence of matrices \(\seq{\A}\), denote
its \(i\)'th element by \(\seq{\A}_i\). For the case of
left-multiplication we then have:

\begin{flalign*}
& f  \brac{ \seq{\A} }^{\oplus}_v
  = \seq{\M_i \A_i}^{\oplus}_v
  = \brac{ \begin{array}{cccc}
      \M_1 \A_1, & \0, & \cdots & \0 \\ \hline
      \0, & \M_2 \A_2, & \cdots & \0 \\ \hline
      \vdots & \vdots & \ddots & \vdots \\ \hline
      \0, & \0, & \cdots & \M_k \A_k
    \end{array} } & \\[10pt]
& \left[ f \brac{ \seq{\Adsum_v} } \right]_v
  = \left[ \seq{\M_i \seq{\Adsum_v}_i | 1 \leq i \leq k} \right]_v
  = \brac{ \begin{array}{c}
      \M_1 \seq{\Adsum_v}_1 \\ \hline
      \M_2 \seq{\Adsum_v}_2 \\ \hline
      \vdots \\ \hline
      \M_k \seq{\Adsum_v}_k
    \end{array} } = & \\[10pt]
& \hspace{1em} = \brac{ \begin{array}{c}
      \M_1 \begin{pmatrix} \A_1, & \0, & \cdots, & \0 \end{pmatrix} \\ \hline
      \M_2 \begin{pmatrix} \0, & \A_2, & \cdots, & \0 \end{pmatrix} \\ \hline
      \vdots \\ \hline
      \M_k \begin{pmatrix} \0, & \0, & \cdots, & \A_k \end{pmatrix}
    \end{array} }
  = \brac{ \begin{array}{cccc}
      \M_1 \A_1, & \0, & \cdots & \0 \\ \hline
      \0, & \M_2 \A_2, & \cdots & \0 \\ \hline
      \vdots & \vdots & \ddots & \vdots \\ \hline
      \0, & \0, & \cdots & \M_k \A_k
    \end{array} } = f \brac{\seq{\A}}^{\oplus}_v &
\end{flalign*}

For the case of right-multiplication we have:

\begin{flalign*}
& f \brac{\seq{\A}}^{\oplus}_h
  = \seq{\A_i \M_i}^{\oplus}_h
  = \brac{ \begin{array}{c|c|c|c}
      \A_1 \M_1 & \0 & \cdots & \0 \\
      \0 & \A_2 \M_2 & \cdots & \0 \\
      \vdots & \vdots & \ddots & \vdots \\
      \0 & \0 & \cdots & \A_k \M_k \\
    \end{array} } & \\[10pt]
& \left[ f \brac{\seq{\Adsum_h}} \right]_h
  = \left[ \seq{\seq{\Adsum_h}_i \M_i | 1 \leq i \leq k} \right]_v
  = \left[ \begin{array}{c|c|c|c}
      \seq{\Adsum_v}_1 \M_1 &
      \seq{\Adsum_v}_2 \M_2 &
      \cdots &
      \seq{\Adsum_v}_k \M_k
    \end{array} \right] = & \\[10pt]
& \hspace{1em} = \brac{ \begin{array}{c|c|c|c}
      \begin{bmatrix} \A_1 \\ \0 \\ \vdots \\ \0 \end{bmatrix} \M_1 &
      \begin{bmatrix} \0 \\ \A_2 \\ \vdots \\ \0 \end{bmatrix} \M_2 &
      \cdots &
      \begin{bmatrix} \0 \\ \0 \\ \vdots \\ \A_k \end{bmatrix} \M_k
    \end{array} }
  = \brac{ \begin{array}{c|c|c|c}
      \A_1 \M_1 & \0 & \cdots & \0 \\
      \0 & \A_2 \M_2 & \cdots & \0 \\
      \vdots & \vdots & \ddots & \vdots \\
      \0 & \0 & \cdots & \A_k \M_k \\
    \end{array} } = & \\[10pt]
& \hspace{1em} = f  \brac{ \seq{\A} }^{\oplus}_h &
\end{flalign*}
\end{proof}
\section{Blockwise Kronecker Product Factorisation for Vectors}
\label{block_kron_factorisation}

\begin{proposition} \label{prop:block_kron_factorisation}
Let \(\a\) and \(\vb\) be two column vectors, each comprised of \(k\)
row blocks of arbitrary lengths. The blockwise Kronecker product \(\a
\odot \vb\) can be factorised as \(\a \odot \vb = \brac{\a \odot
I_{\vb}} \vb = \brac{ I_{\a} \odot \vb } \a\), where \(I_{\vb}\) and
\(I_{\a}\) are identity matrices with the same number of rows and row
block structure as \(\vb\) and \(\a\), respectively.
\end{proposition}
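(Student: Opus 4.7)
The plan is to prove each equality block by block, using the mixed-product property of the ordinary Kronecker product together with the observation that the row blocks of $I_{\vb}$ (and $I_{\a}$) act as block-projection operators on vectors with a matching block structure.

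First, I would spell out the block structure of each object. Writing $\a$ and $\vb$ as stacks of row blocks $\a_i$ (of length $m_i$) and $\vb_i$ (of length $n_i$) for $i = 1, \ldots, k$, the blockwise Kronecker product $\a \odot \vb$ is by definition the vertical stacking of the ordinary Kronecker products $\a_i \otimes \vb_i$. Similarly, $\a \odot I_{\vb}$ is the vertical stacking of $\a_i \otimes (I_{\vb})_i$, where $(I_{\vb})_i$ denotes the $i$-th row block of $I_{\vb}$ --- an $n_i \times (\sum_j n_j)$ matrix comprising the rows of the full identity corresponding to block $i$. The key preliminary observation is the block-projection identity $(I_{\vb})_i\, \vb = \vb_i$, which is immediate from this construction. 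An analogous identity $(I_{\a})_i\,\a = \a_i$ holds on the $\a$ side.

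Second, I would evaluate $(\a \odot I_{\vb})\vb$ one block at a time. Regarding $\vb$ as $1 \otimes \vb$, the standard Kronecker mixed-product identity $(A \otimes B)(C \otimes D) = (AC)\otimes(BD)$ gives $(\a_i \otimes (I_{\vb})_i)\,\vb = \a_i \otimes ((I_{\vb})_i\,\vb) = \a_i \otimes \vb_i$, which is exactly the $i$-th block of $\a \odot \vb$. Reassembling over $i = 1, \ldots, k$ yields the first equality. The second equality $\a \odot \vb = (I_{\a} \odot \vb)\,\a$ then follows by an entirely symmetric calculation, using $((I_{\a})_i \otimes \vb_i)\,\a = ((I_{\a})_i\,\a)\otimes \vb_i = \a_i \otimes \vb_i$.

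The main hurdle is purely notational bookkeeping: $I_{\vb}$ and $I_{\a}$ are ordinary identity matrices of full dimension but endowed with a row-block partitioning inherited from $\vb$ and $\a$, so care is needed to distinguish the \emph{block} $(I_{\vb})_i$ (a wide matrix that selects block $i$ from a stacked vector) from a \emph{within-block} identity $I_{n_i}$ (a square matrix). Once that distinction is pinned down and the projection identity is verified, the proof reduces to one application of the Kronecker mixed-product rule per block. An alternative route would be to invoke Proposition~\ref{prop:bds_vs_mmul} directly, by writing the sequence $\seq{\a_i \otimes \vb_i}$ as the left-multiplication family $\seq{(\a_i \otimes I_{n_i})\,\vb_i}$ and translating this through the commutativity of $(\cdot)^{\oplus}$ with such operations, but the direct block-wise route sketched above seems cleaner.
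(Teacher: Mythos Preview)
Your proposal is correct but takes a genuinely different, and more direct, route than the paper. The paper first factors each block through the \emph{within-block} identities, writing $\a_i \otimes \vb_i = (\a_i \otimes I_{n_i})\vb_i = (I_{m_i} \otimes \vb_i)\a_i$, then assembles these into the blockwise direct sums $\seq{\a_i \otimes I_{n_i}}^{\oplus}$ and $\seq{I_{m_i} \otimes \vb_i}^{\oplus}$, and finally invokes Proposition~\ref{prop:bds_vs_mmul} to identify those direct sums with $\a \odot I_{\vb}$ and $I_{\a} \odot \vb$. You instead work directly with the \emph{wide} row blocks $(I_{\vb})_i$ and $(I_{\a})_i$ of the full-size identities, using their selection property $(I_{\vb})_i\,\vb = \vb_i$ together with the mixed-product rule so that the $i$-th block of $(\a \odot I_{\vb})\vb$ immediately collapses to $\a_i \otimes \vb_i$; no appeal to Proposition~\ref{prop:bds_vs_mmul} is needed. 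Your route is more self-contained and arguably cleaner for this result in isolation, while the paper's route serves the secondary purpose of exercising Proposition~\ref{prop:bds_vs_mmul} as a reusable device for the later derivations. The alternative you sketch at the end is in fact precisely the paper's own argument.
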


\begin{proof}
Let the vector \(\a\) have dimensions \(M \times 1\) and \(\seq{\a}
\equiv \seq{\underset{m_i \times 1}{\a_i} | 1 \leq i \leq k}\) be the
sequence of vectors which represent the row blocks in \(\a\) such that
\(\sum_{i = 1}^k{m_i} = M\). Similarly, let the vector \(\vb\) have
dimensions \(N \times 1\) and the sequence of vectors \(\seq{\vb} \equiv
\seq{\underset{n_i \times 1}{\vb_i} | 1 \leq i \leq k}\) represent the
row blocks in \(\vb\) such that \(\sum_{i = 1}^k{n_i} = N\).

The blockwise Kronecker product \(\a \odot \vb\) can then be defined as:

\begin{flalign*}
& \a \odot \vb = \brac{
    \begin{array}{c}
      \a_1 \otimes \vb_1 \\ \hline
      \a_2 \otimes \vb_2 \\ \hline
      \vdots \\ \hline
      \a_k \otimes \vb_k
    \end{array}
} &
\end{flalign*}

Note that by the properties of the Kronecker product the following holds
for any two matrices \(\A\) and \(\B\):

\begin{flalign} \label{eqn:kron_factorisation_matrices}
& \underset{m \times p}{\A} \otimes \underset{n \times q}{\B}
  = \brac{\A \otimes I_n} \brac{I_p \otimes \B}
  = \brac{I_m \otimes \B} \brac{\A \otimes I_q} &
\end{flalign}

In the special case of a Kronecker product between two vectors, \(p\)
and \(q\) reduce to \(1\) and the identity matrices \(I_p\) and \(I_q\)
become \(1\) by association. As a result, for any given Kronecker
product \(\a_i \otimes \vb_i\), the following holds:

\begin{flalign} \label{eqn:kron_factorisation_vectors}
& \underset{m_i \times 1}{\a_i} \otimes \underset{n_i \times 1}{\vb_i}
  = \brac{\a_i \otimes I_p} \vb_i
  = \brac{I_m \otimes \vb_i} \a_i &
\end{flalign}

This allows us to rewrite the blockwise Kronecker product \(\a \odot
\vb\) in two alternative ways:

\begin{subequations}
\begin{flalign}
& \a \odot \vb = \brac{
    \begin{array}{c}
      \brac{\a_1 \otimes I_{n_1}} \vb_1 \\ \hline
      \brac{\a_2 \otimes I_{n_2}} \vb_2 \\ \hline
      \vdots \\ \hline
      \brac{\a_k \otimes I_{n_k}} \vb_k
    \end{array}
} & \label{eqn:bk_deriv_1a} \\[15pt]
& \a \odot \vb = \brac{
    \begin{array}{c}
      \brac{I_{m_1} \otimes \vb_1} \a_1 \\ \hline
      \brac{I_{m_2} \otimes \vb_2} \a_2 \\ \hline
      \vdots \\ \hline
      \brac{I_{m_k} \otimes \vb_k} \a_k
    \end{array}
} & \label{eqn:bk_deriv_1b}
\end{flalign}
\end{subequations}

Define a sequence of matrices \(\seq{\a_i \otimes I_{n_i} | 1 \leq i
\leq k} \equiv \seq{\a_i \otimes I_{n_i}}\), and another
sequence \(\seq{I_{m_i} \otimes \vb_i | 1 \leq i \leq k} \equiv
\seq{I_{m_i} \otimes \vb_i} \). We can then rewrite
(\ref{eqn:bk_deriv_1a}) and (\ref{eqn:bk_deriv_1b}) using the blockwise
direct sum operator introduced in Section \ref{blockwise_direct_sum}:

\begin{subequations}
\begin{flalign}
& \a \odot \vb = \seq{\a_i \otimes I_{n_i}}^{\oplus} \vb & \label{eqn:bk_deriv_2a} \\[10pt]
& \a \odot \vb = \seq{I_{m_i} \otimes \vb_i}^{\oplus} \a & \label{eqn:bk_deriv_2b}
\end{flalign}
\end{subequations}

It then remains to show that \(\seq{\a_i \otimes I_{n_i}}^{\oplus}\) and
\(\seq{I_{m_i} \otimes \vb_i}^{\oplus}\) can be written as \( \a \odot
I_{\vb} \) and \( I_{\a} \odot \vb \), respectively. This can be done
with the help of Proposition \ref{prop:bds_vs_mmul}.

First of all, consider the sequences of identity matrices
\(\seq{I_{m_i}} \equiv \seq{I_{m_i} | 1 \leq i \leq k}\) and
\(\seq{I_{n_i}} \equiv \seq{I_{n_i} | 1 \leq i \leq k}\) on a standalone
basis. The sequences \(\seq{I_{m_i} \otimes \vb_i}\) and \(\seq{\a_i
\otimes I_{n_i}}\) can then be expressed as functions \(f: S \rightarrow
S\) and \(g: S \rightarrow S\) where \(S\) represents the set of all
matrix sequences of length \(k\). such that:

\begin{flalign*}
& f \brac{\seq{I_{m_i}}}
  = \seq{\seq{I_{m_i}}_i \otimes \seq{\vb}_i | 1 \leq i \leq k}
  \equiv \seq{I_{m_i} \otimes \vb_i | 1 \leq i \leq k}
  \equiv \seq{I_{m_i} \otimes \vb_i} & \\[10pt]
& g \brac{\seq{I_{n_i}}}
  = \seq{\seq{\a}_i \otimes \seq{I_{n_i}}_i | 1 \leq i \leq k}
  \equiv \seq{\a_i \otimes I_{n_i} | 1 \leq i \leq k}
  \equiv \seq{\a_i \otimes I_{n_i}} &
\end{flalign*}

Proposition \ref{prop:bds_vs_mmul} applies because we can express
\(\seq{\a_i \otimes I_{n_i}}\) and \(\seq{I_{m_i} \otimes \vb_i}\) as
left-multiplications over the sequences of identity matrcies
\(\seq{I_{m_i}}\) and \(\seq{I_{n_i}}\):

\begin{flalign*}
& \seq{I_{m_i} \otimes \vb_i} = \seq{\brac{I_{m_i} \otimes \vb_i} I_{m_i}} & \\[10pt]
& \seq{\a_i \otimes I_{n_i}} = \seq{\brac{\a_i \otimes I_{n_i}} I_{n_i}} &
\end{flalign*}

It then follows that:

\begin{flalign*}
& f \brac{\seq{I_{m_i}}}^{\oplus}_v
  = \left[ f \brac{\seq{\seq{I_{m_i}}^{\oplus}_v}} \right]_v & \\[10pt]
& g \brac{\seq{I_{n_i}}}^{\oplus}_v
  = \left[ g \brac{\seq{\seq{I_{n_i}}^{\oplus}_v}} \right]_v &
\end{flalign*}

Next, note that \(\seq{I_{m_i}}^{\oplus}_v\) produces an identity matrix
of size \(M\) with a row block structure of \(\a\), while
\(\seq{I_{n_i}}^{\oplus}_v\) produces an identity matrix of size \(N\)
with a row block structure of \(\vb\). In other words,
\(\seq{I_{m_i}}^{\oplus}_v = I_{\a}\) and \(\seq{I_{m_i}}^{\oplus}_v =
I_{\vb}\). This means:

\begin{flalign*}
& \left[ f \brac{\seq{\seq{I_{m_i}}^{\oplus}_v}} \right]_v
  = \left[ f \brac{\seq{I_{\a}}} \right]_v
  = \left[ \seq{ \seq{ I_{\a} }_i \otimes \vb_i } \right]_v & \\[10pt]
& \left[ g \brac{\seq{\seq{I_{n_i}}^{\oplus}_v}} \right]_v
  = \left[ g \brac{\seq{I_{\vb}}} \right]_v
  = \left[ \seq{ \a_i \otimes \seq{I_{\vb}}_i } \right]_v &
\end{flalign*}

Lastly, we note that the matrix representation of a pairwise Kronecker
product over two sequences of matrices is the same as a blockwise
Kronecker product if the sequences represent matrix blocks along the
same dimension. For example, \(\left[ \seq{ \a_i \otimes \seq{I_{\vb}}_i
} \right]_v = \a \odot I_{\vb}\) because \(\a_i\) and
\(\seq{I_{\vb}}_i\) are row blocks of \(\a\) and \(I_{\vb}\),
respectively.

Putting all the steps together we have:

\begin{align}
& \a \odot \vb = \seq{\a_i \otimes I_{n_i}}^{\oplus} \vb
               = \left[ \seq{\a_i \otimes \seq{\seq{I_{n_i}}^{\oplus}_v}_i} \right]_v \vb
               = \left[ \seq{ \a_i \otimes \seq{I_{\vb}}_i } \right]_v \vb
               = \brac{ \a \odot I_{\vb} }  \vb \label{eqn:bk_deriv_3a} \\[10pt]
& \a \odot \vb = \seq{I_{m_i} \otimes \vb_i}^{\oplus} \a
               = \left[ \seq{\seq{\seq{I_{m_i}}^{\oplus}_v }_i \otimes \vb_i} \right]_v \a
               = \left[ \seq{\seq{I_{\vb}}_i \otimes \vb } \right]_v \a
               = \brac{ I_{\a} \odot \vb } \a \label{eqn:bk_deriv_3b}
\end{align}
\end{proof}
\section{Conditions for the Redundancy of PCA Decomposition}
\label{pca_redundancy}

\begin{proposition} \label{prop:pca_redundancy}
Take a \(\brac{1 \times m}\) vector of observed factors \(X\) whose
covariance matrix has full rank. Define the PCA-based diffusion indices
obtained from \(X\) as \(\tX^{di} := X \O^{di}\), where \(\O^{di}\) is
an \(\brac{m \times m}\) matrix of principal component weights. Assume
WLOG that the ``true'' latent factor \(\tx^*\) that drives the dependent
can be expressed as a linear combination of the diffusion indices such
that \(\tx^* := \tX^{di} \o^*\) for some weight vector \(\o^*\). It then
follows from the properties of PCA that \(\tx^*\) can be expressed in
terms of \(X\) directly. \end{proposition}

\begin{proof}

By the properties of PCA, \(\O^{di}\) is a rotation matrix which means
it is invertible with \(\brac{\O^{di}}^{-1} = \brac{\O^{di}}'\). It then
follows that any valid solution for \(\o^*\) in the DI space can be
expressed as \(\o^* = \brac{\O^{di}}' \o^{**}\) for an equally valid
weight vector \(\o^{**}\) in the observed variable space. Formally:

\begin{equation*} 
\tx^* = \tX^{di} \o^* = X \O^{di} \o^* = X \O^{di}
        \brac{\O^{di}}' \o^{**} = X \o^{**}
\end{equation*}

The same result extends to the multivariate case in which the dependent
is a function of more than one latent predictor because the derivations
apply after replacing the vectors \(\o^*\) and \(\o^{**}\) with matrices
\(\O^*\) and \(\O^{**}\) where each column is a weight vector for one
latent factor \(\tx^*_j\).

\end{proof}
\section{LARX: Derivation of the Coefficient and Weight Vectors}
\label{derivation_w_o_p_g}
This section covers the derivations of the sample estimates \(\wh\),
\(\oh\), \(\ph\) and \(\bh\) for the coefficient vectors \(\w\), \(\o\),
\(\p\) and \(\b\), respectively. The customary ``hat'' superscripts are
omitted from here on out because they are implied everywhere. The
derivations for the Lagrange multipliers are presented in
\ref{derivation_ry_rl} and \ref{derivation_tlx_tlp}.

First, we note that the Lagrangian optimisation problem (\ref{eqn:clarx_lagr})
is convex, which means that the solution is obtained by setting various
partial derivatives to zero. Second, we note that the properties of the
Kronecker product and the block-wise Kronecker product allow us to
factorise \(\pw\) and \(\bo\) for compatibility with traditional matrix
calculus:

\begin{subequations}
\begin{align}
& \pw = \wip = \piw \label{eqn:factorisation_p_w} & \\
& \bo = \oib = \bio \label{eqn:factorisation_b_o} &
\end{align}
\end{subequations}

Here, \(I_a\) represents an identity matrix of size \(a\) (where \(a\)
is a scalar value), while \(I_{\a}\) represents an identity matrix with
the same size and row block structure as vector \(\a\) (vectors are
conventionally represented by bold letters). The property of the
Kronecker product relevant for (\ref{eqn:factorisation_p_w}) is well
established but reiterated for completeness in equations
(\ref{eqn:kron_factorisation_matrices}-\ref{eqn:kron_factorisation_vectors}).
The factorisation of the block-wise Kronecker product used for
(\ref{eqn:factorisation_b_o}) is proved in \ref{block_kron_factorisation}.
Taking \(\Sm_{BA}\) to denote the transpose of \(\Sm_{AB}\), the
solution for \(\b\) becomes:

\begin{align} \label{eqn:clarx_solution_beta}
& \frac{\partial}{\partial \, \b} \, \Lagr = 2 \left[
    \oi' \Sm_{X} \oib -
    \oi' \Sm_{XY} \w +
    \oi' \Sm_{XA} \pw \right] \nonumber \\[10pt]
& \b = \left[ \oi' \Sm_{X} \oi \right]^{-1}
       \left[ \oi' \Sm_{XY} \w - \oi' \Sm_{XA} \pw \right]
\end{align}

Similarly, recalling that \(\pw = \wip\), the solution for \(\p\) is:

\begin{align} \label{eqn:clarx_solution_phi}
& \frac{\partial}{\partial \, \p} \, \Lagr = 2 \left[
    \wi' \Sm_{A} \wip -
    \wi' \Sm_{AY} \w +
    \wi' \Sm_{AX} \bo \right] \nonumber \\[10pt]
& \p = \left[ \wi' \Sm_{A} \wi \right]^{-1}
       \left[ \wi' \Sm_{AY} \w - \wi' \Sm_{AX} \bo \right]
\end{align}

For the dependent weight vector \(\w\), we recall that \(\pw = \piw\).
The partial derivative of \(\Lagr\) with respect to \(\w\) is then given
by:

\begin{equation*}
\begin{split}
\frac{\partial}{\partial \, \w} \, \Lagr
    & = 2 \left[ \left( 1 + \ly \right) \Sm_{Y} +
            \pi' \Sm_{A} \pi -
            \pi' \Sm_{AY} -
            \Sm_{YA} \pi
        \right] \w \\
    & - 2 \left[
            \Sm_{YX} \bo -
            \pi' \Sm_{AX} \bo
        \right] + \l_l \1_n
\end{split}
\end{equation*}

Note that \(\frac{\partial}{\partial \, \w} \, \w' \Sm_{YA} \piw\)
resolves to \(\left[ \pi' \Sm_{AY} + \Sm_{YA} \pi \right] \w\) because
the quadratic form is not symmetric. Setting the partial derivative to
zero and expressing in terms of \(\w\), we get:

\begin{equation} \label{eqn:clarx_derivation_w}
\begin{split}
(1 + \ly ) \, \Sm_{Y} \w & =
        \left[
            \pi' \Sm_{AY} +
            \Sm_{YA} \pi -
            \pi' \Sm_{A} \pi \right] \, \w \\[10pt]
            & + \left[
            \Sm_{YX} -
            \pi' \Sm_{AX}
        \right] \bo - \frac{\l_l}{2} \1_n
\end{split}
\end{equation}

For ease of notation, define:

\begin{align*}
\underset{1 \times n}{\tv_1} = & \left[
    \pi' \Sm_{AY} +
    \Sm_{YA} \pi -
    \pi' \Sm_{A} \pi
\right] \w \\[10pt]
\underset{1 \times n}{\tv_2} = & \left[\Sm_{YX} - \pi' \Sm_{AX} \right] \bo
\end{align*}

Equation (\ref{eqn:clarx_derivation_w}) then becomes:

\begin{equation} \label{eqn:clarx_derivation_w_short}
(1 + \ly ) \, \Sm_{Y} \w = \tv_1 + \tv_2 - \frac{\l_l}{2} \1_n
\end{equation}

Setting \(\ry = (1 + \ly)\), \(\r_l = \frac{\l_l}{2}\) and
pre-multiplying both sides by \(\frac{1}{\ry} \, \Sm_{Y}^{-1}\) we
get:

\begin{equation} \label{eqn:clarx_solution_w}
\w = \frac{1}{\ry} \left[ \Sm_{Y}^{-1} \left( \tv_1 + \tv_2 \right)
   - \r_l \Sm_{Y}^{-1} \1_n \right]
\end{equation}

For the weight vector \(\o\), we recall that \(\bo = \bio\).
Furthermore, we can refactor the constraint terms for compatibility with
traditional matrix calculus using the properties of the block-wise direct
sum operator. For the portfolio constraints on \(\o_j\) we can apply
Proposition \ref{prop:bds_vs_vectors} to obtain:

\begin{equation*} 
\odt \1_{\o} = \left( \t1od \right)' \o
\end{equation*}

For the variance constraints on \(\tx_j\), we note three things. First
of all, because \(\u\) has the same length and row block structure as
\(\b\), the term \(\uo\) can be factorised in the same way as \(\bo\),
namely:

\begin{equation*} 
\uo = \ui \o = \oi \u
\end{equation*}

Second, by applying Proposition \ref{prop:bds_vs_transpose} we have:

\begin{equation*} 
\left[ \uo' \right]^{\oplus} = \left[ \uo^{\oplus} \right]'
\end{equation*}

Third, the operation \(\uo\) can be expressed as a left-multiplication
over the sequence of blocks in \(\o\), i.e., \(\uo = \ui \o = \left[
\seq{\brac{\u_j \otimes \seq{I_{\o}}_j} \o_j | 1 \leq j \leq K}
\right]_v\), which means that according to Proposition \ref{prop:bds_vs_mmul}
we have:

\begin{equation*} 
\uo^{\oplus} \equiv \left[ \ui \o \right]^{\oplus} = \ui \od
\end{equation*}

Putting these transformations together, we can rewrite the block-wise
quadratic form as:

\begin{equation*}
\begin{split}
\left[ \uo' \right]^{\oplus} & \Smd_X \uo
 = \left[ \uo^{\oplus} \right]' \Smd_X \uo
   = \left\{ \left[ \ui \o \right]^{\oplus} \right\}' \Smd_X \uo \\
 & = \left[ \ui \od \right]' \Smd_X \uo
   = \brac{ \od }' \ui' \Smd_x \ui \o
\end{split}
\end{equation*}

Applying these transformations, the partial derivative of
(\ref{eqn:clarx_lagr}) with respect to \(\o\) is:

\begin{equation*}
\frac{\partial}{\partial \, \o } \, \Lagr
     = 2 \bi' \left[
            \Sm_{X} \bio -
            \Sm_{XY} \w +
            \Sm_{XA} \pw
        \right] + 2 \M_2 \od \tl_x  + \t1od \tl_p
\end{equation*}

where \(\M_2 = \ui' \Smd_X \ui\). Furthermore, Propositions
\ref{prop:bds_vs_odot} and \ref{prop:block_kron_factorisation} prove that:

\begin{equation*}
\od \tl_x = \brac{\o \odot I_K} \tl_x = \brac{\tl_x \odot I_{\o}} \o
\end{equation*}

Re-arranging for \(\o\), we get:

\begin{equation} \label{eqn:clarx_solution_o}
\o = \left[ \bi' \Sm_{X} \bi + \M_2 \brac{\tl_x \odot I_{\o}} \right]^{-1}
    \left[ \tv_3 - \frac{1}{2} \t1od \tl_p \right] 
\end{equation}

with \(\tv_3 = \bi' \left[ \Sm_{XY} - \Sm_{XA} \pi \right] \w\).
\section{LARX: Derivation of the Lagrange Multipliers for the Dependent}
\label{derivation_ry_rl}
Recalling that \(\ry = (1 + \ly)\) and \(\r_l = \frac{\l_l}{2}\),
rewrite equation (\ref{eqn:clarx_derivation_w_short}) as:

\begin{flalign} \label{eqn:clarx_ry_1}
& \ry \, \Sm_{Y} \w = \tv_1 + \tv_2 - \r_l \1_n &
\end{flalign}

To solve for \(\r_l\), pre-multiply both sides of (\ref{eqn:clarx_ry_1}) by
\(\1_n'\) and re-arrange:

\begin{flalign*}
& \ry \, \1_n' \Sm_{Y} \w = \1_n' \left( \tv_1 + \tv_2 \right) - n \r_l & \\[10pt]
& n \r_l = \1_n' \left( \tv_1 + \tv_2 \right) - \ry \, \1_n' \Sm_{Y} \w &
\end{flalign*}

Dividing both sides by \(n\) produces:

\begin{flalign} \label{eqn:clarx_solution_rl}
& \r_l = \frac{1}{n} \1_n' \left( \tv_1 + \tv_2 \right) -  \frac{1}{n} \ry \, \1_n' \Sm_{Y} \w &
\end{flalign}

We can find an alternative solution for \(\r_l\) by pre-multiplying both
sides of (\ref{eqn:clarx_ry_1}) with \(\w'\) and recalling \(\w' \Sm_{Y} \w =
\s2_y\) and \(\w' \1_n = l_y\):

\begin{flalign*}
& \ry \s2_y = \w' \left( \tv_1 + \tv_2 \right) - \r_l l_y & \\[10pt]
& \r_l l_y = \w' \left( \tv_1 + \tv_2 \right) - \ry \s2_y &
\end{flalign*}

dividing both sides by \(l_y\) we get:

\begin{flalign} \label{eqn:clarx_solution_rl_alt}
& \r_l = \frac{1}{l_y} \w' \left( \tv_1 + \tv_2 \right) -
  \frac{\s2_y}{l_y}\ry &
\end{flalign}

This alternative solution is not very practical because it does not
allow for the case of \(l_y = 0\). However, we can use it in conjunction
with (\ref{eqn:clarx_solution_rl}) to eliminate \(\r_l\) and solve for
\(\r_y\):

\begin{flalign*}
& \frac{1}{n} \1_n' \left( \tv_1 + \tv_2 \right) -  \frac{1}{n} \ry \, \1_n' \Sm_{Y} \w
  = \frac{1}{l_y} \w' \left( \tv_1 + \tv_2 \right) - \frac{\s2_y}{l_y}\ry & \\[10pt]
& \frac{\s2_y}{l_y}\ry - \frac{1}{n} \ry \, \1_n' \Sm_{Y} \w
  = \frac{1}{l_y} \w' \left( \tv_1 + \tv_2 \right) - \frac{1}{n} \1_n' \left( \tv_1 + \tv_2 \right) & \\[10pt]
& \ry \frac{n \s2_y - l_y \1_n' \Sm_{Y} \w}{n l_y}
  = \frac{\left( n \w - l_y \1_n \right)' \left( \tv_1 + \tv_2 \right) }{n l_y} &
\end{flalign*}

Rearranging for \(\ry\) we get:

\begin{flalign} \label{eqn:clarx_solution_ry}
& \ry = \frac{ \left( n \w - l_y \1_n \right)' \left( \tv_1 + \tv_2 \right) }
             {n \s2_y - l_y \1_n' \Sm_{Y} \w} &
\end{flalign}
\section{LARX: Derivation of the Lagrange Multipliers for the Explanatory}
\label{derivation_tlx_tlp}
Start from the first-order condition for \(\o\). Expressing in terms of
\(\tl_p\), we get:

\begin{flalign} \label{eqn:deriv_tlpx_1}
& \t1od \tl_p = 2 \tv_3 - 2 \tv_4 - 2 \ui' \Smd_X \ui \od \tl_x &
\end{flalign}

Define the following shorthand notations for convenience:

\begin{flalign*}
& \underset{K \times K}{\V} = \diag(\ts2_x) = \begin{pmatrix}
    \s2_{x,1}, & 0, & \cdots & 0 \\
    0, & \s2_{x,2}, & \cdots & 0 \\
    \vdots & \vdots & \ddots & \vdots \\
    0, & 0, & \cdots & \s2_{x,K}
\end{pmatrix}, \underset{K \times K}{\P} = \diag(\lp) = \begin{pmatrix}
    l_{p,1}, & 0, & \cdots & 0 \\
    0, & l_{p,2}, & \cdots & 0 \\
    \vdots & \vdots & \ddots & \vdots \\
    0, & 0, & \cdots & l_{p,K}
\end{pmatrix}, & \\[10pt]
& \underset{K \times K}{\M_1} = \left( \t1od \right)' \t1od = \begin{pmatrix}
    m_1, & 0, & \cdots & 0 \\
    0, & m_2, & \cdots & 0 \\
    \vdots & \vdots & \ddots & \vdots \\
    0, & 0, & \cdots & m_K
\end{pmatrix}, \quad \underset{K \times K}{\M_2} = \ui' \Smd_X \ui &
\end{flalign*}

A system of two vector equations, each with \(K\) rows and \(K\)
unknowns, is produced by pre-multiplying the first-order condition for
\(\o\) with \(\brac{\od}'\) and \(\brac{\t1od}'\), respectively:

\begin{subnumcases}{\hspace{-15em}}
\P \tl_p = 2 \odt \left( \tv_3 - \tv_4 \right) - 2 \V \tl_x
    \label{eqn:deriv_tlpx_cond_1} \\[10pt]
\M_1 \tl_p = 2 \left( \t1od \right)' \left( \tv_3 - \tv_4 \right)
                 - 2 \left( \t1od \right)' \M_2 \od \tl_x
    \label{eqn:deriv_tlpx_cond_2}
\end{subnumcases}

Pre-multiplying \ref{eqn:deriv_tlpx_cond_1} and
\ref{eqn:deriv_tlpx_cond_2} by \(\P^{-1}\) and \(\M_1^{-1}\),
respectively, we get:

\begin{subnumcases}{\hspace{-12em}}
\tl_p = 2 \P^{-1} \odt \left( \tv_3 - \tv_4 \right) - 2 \P^{-1} \V \tl_x &
    \label{eqn:clarx_solution_tlp_a} \\[10pt]
\tl_p = 2 \M_1^{-1} \left( \t1od \right)' \left( \tv_3 - \tv_4 \right) -
                2 \M_1^{-1} \left( \t1od \right)' \M_2 \od \tl_x &
    \label{eqn:clarx_solution_tlp_b}
\end{subnumcases}

The solution for \(\tl_p\) given by \ref{eqn:clarx_solution_tlp_b} is
more practical because \ref{eqn:clarx_solution_tlp_a} does not allow for
the case of zero-sum weights (zeros on the diagonal of \(\P\) would
make it uninvertible).

The solution for \(\tl_x\) can be derived by setting the right-hand side
of \ref{eqn:clarx_solution_tlp_a} equal to the right-hand side of
\ref{eqn:clarx_solution_tlp_b}:

\begin{flalign*}
& 2 \P^{-1} \odt \left( \tv_3 - \tv_4 \right) - 2 \P^{-1} \V \tl_x =
  2 \M_1^{-1} \left( \t1od \right)' \left( \tv_3 - \tv_4 \right) -
  2 \M_1^{-1} \left( \t1od \right)' \M_2 \od \tl_x &
\end{flalign*}

We can pre-multiply both sides by \(\M_1 \P\) to avoid problems with
inverting \(\P\) in the presence of zero-sum weight constraints (note
that \(\M_1 \P = \P \M_1\) because both are diagonal):

\begin{flalign*}
& 2 \M_1 \odt \left( \tv_3 - \tv_4 \right) - 2 \M_1 \V \tl_x =
  2 \P \left( \t1od \right)' \left( \tv_3 - \tv_4 \right) - 2 \P \left( \t1od \right)' \M_2 \od \tl_x &
\end{flalign*}

Re-arranging for \(\tl_x\) we get:

\begin{equation} \label{eqn:clarx_solution_tlx}
\tl_x = \left[ \M_1 \V - \P \left( \t1od \right)' \M_2 \od \right]^{-1}
  \left( \od \M_1 - \t1od \P \right)' \left( \tv_3 - \tv_4 \right)
\end{equation}
\section{Out-of-Sample Forecasting Performance of the Alternative Forecasting Models}
\label{empirical_results_rev}
\begin{figure}[H]
    \centering  
    \includegraphics[width=\textwidth]{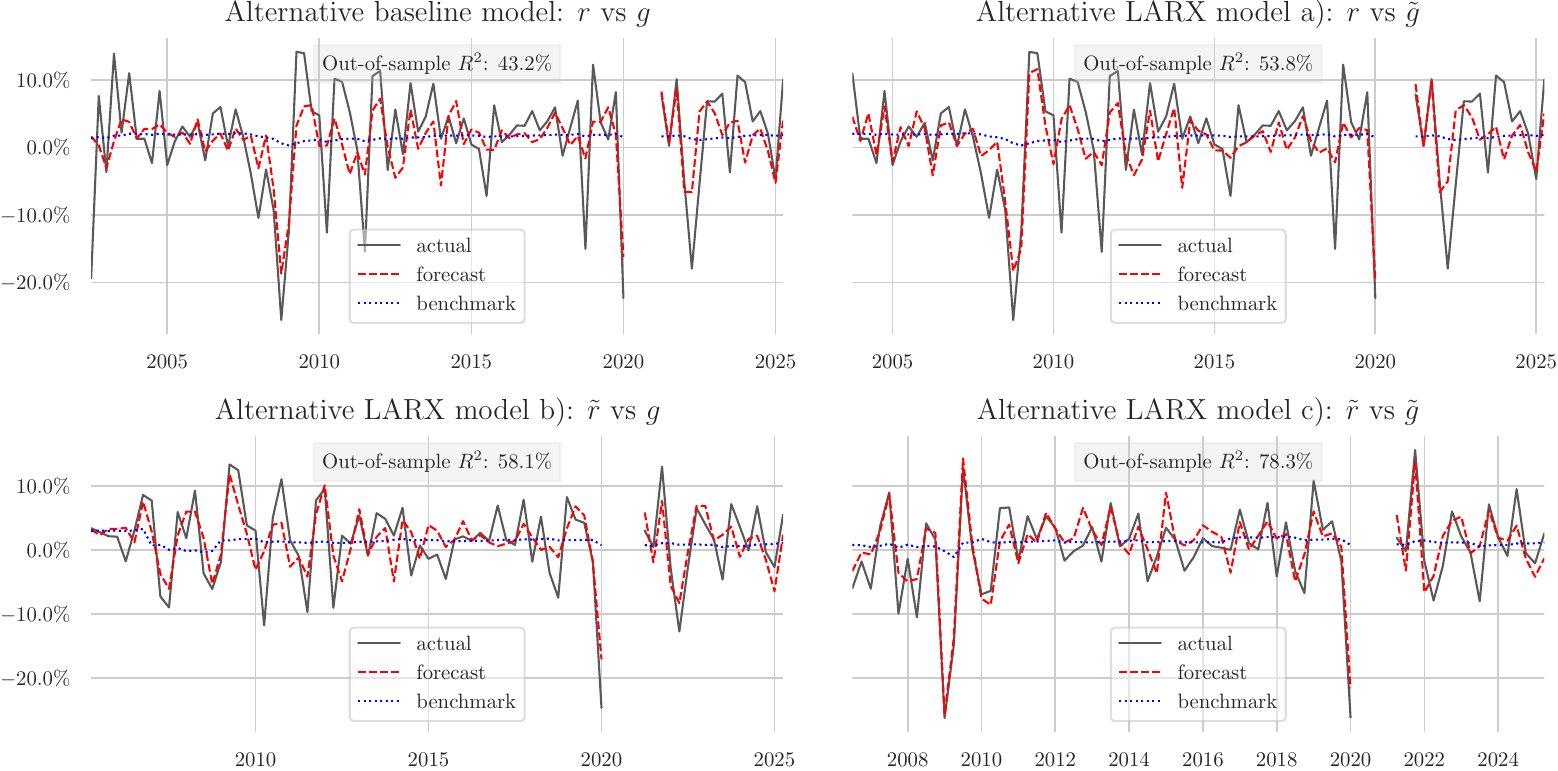}
    \caption{Out-of-sample performance: Market returns as the dependent} \label{NEW_fig-fc_ols_vs_clarx_reversed}
    \floatfoot{ \footnotesize %
      Note: Rolling out-of-sample forecasting performance of the
      alternative forecasting models with stock market returns as the
      dependent variable regressed on lags zero to 2 of real economic
      activity and 3 autoregressive lags as specified by equations
      (\ref{eqn:regression_models_rev})-(\ref{eqn:regression_all_latent_rev}).
      The target variance for the real activity SDI is set to the
      full-sample variance of quarterly US real GDP growth (approx.
      0.000124). The variance of the market SDI is set to the
      full-sample variance of the S\&P 500 quarterly log-return
      (approx. 0.006348).
    }
\end{figure}

\begin{subequations} \label{eqn:regression_models_rev}
Alternative baseline OLS/ARX model: The S\&P 500 \(r\) vs real GDP \(g\):
\begin{equation} \tag{\ref*{eqn:regression_models_rev}}
r_t = c + \sum_{\tau = 1}^3 \phi_{t-\tau} r_{t-\tau} +
          \sum_{\tau = 0}^2 \beta_{t-\tau} g_{t-\tau} + e
\end{equation}
Alternative LARX model: Latent explanatory: The S\&P 500 vs the real activity SDI
\(\tg\):
\begin{equation}\label{eqn:regression_r_nonlatent_rev}
r_t = c + \sum_{\tau = 1}^3 \phi_{t-\tau} r_{t-\tau} +
          \sum_{\tau = 0}^2 \beta_{t-\tau} \tg_{t-\tau} + e
\end{equation}
LARX model 2b: Latent dependent: Market SDI \(\tr\) vs real GDP:
\begin{equation}\label{eqn:regression_g_nonlatent_rev}
\tr_t = c + \sum_{\tau = 1}^3 \phi_{t-\tau} \tr_{t-\tau} +
            \sum_{\tau = 0}^2 \beta_{t-\tau} g_{t-\tau} + e
\end{equation}   
LARX model 2c: Both latent: Market SDI vs the real activity SDI:
\begin{equation}\label{eqn:regression_all_latent_rev}
\tr_t = c + \sum_{\tau = 1}^3 \phi_{t-\tau} \tr_{t-\tau} +
            \sum_{\tau = 0}^2 \beta_{t-\tau} \tg_{t-\tau} + e
\end{equation}
\end{subequations}
\section{Data reference}
\label{sec:org483dd8c}
\begin{table}[H]
\centering
\caption{Data series used in the empirical study}
\label{table:data_reference}
\begin{adjustbox}{width=\textwidth, nofloat=table}
\begin{threeparttable}
  \begin{tabular}{lllll}
  \toprule
  Dataset & Source & Ticker\tnote{1} & Frequency & History start \\
  \midrule
  Real GDP & U.S. Bureau of Economic Analysis & GDPC1 & Quarterly & 1947Q1 \\
  \quad Personal Consumption Expenditure (Cons.) & U.S. Bureau of Economic Analysis & PCECC96 & Quarterly & 1947Q1 \\
  \quad Gross Private Domestic Investment (Inv.) & U.S. Bureau of Economic Analysis & GPDIC1 & Quarterly & 1947Q1 \\
  \quad Government Consumption and Investment (Govt.) & U.S. Bureau of Economic Analysis & GCEC1 & Quarterly & 1947Q1 \\
  \quad Exports of Goods and Services (Exp.) & U.S. Bureau of Economic Analysis & EXPGSC1 & Quarterly & 1947Q1 \\
  \quad Imports of Goods and Services (Imp.) & U.S. Bureau of Economic Analysis & IMPGSC1 & Quarterly & 1947Q1 \\
  \hdashline S\&P 500 & Investing.com & US500 & Monthly & 1989-10 \\
  \quad Energy (En.) & Investing.com & SPNY & Monthly & 1989-10 \\
  \quad Materials (Mat.) & Investing.com & SPLRCM & Monthly & 1989-10 \\
  \quad Industrials (Ind.) & Investing.com & SPLRCI & Monthly & 1989-10 \\
  \quad Financials (Fin.) & Investing.com & SPSY & Monthly & 1989-10 \\
  \quad Healthcare (HC) & Investing.com & SPXHC & Monthly & 1989-10 \\
  \quad Consumer Discretionary (Disc.) & Investing.com & SPLRCD & Monthly & 1989-10 \\
  \quad Consumer Staples (Stapl.) & Investing.com & SPLRCS & Monthly & 1989-10 \\
  \quad Communication (Telco) & Investing.com & SPLRCL & Monthly & 1989-10 \\
  \quad Technology (IT) & Investing.com & SPLRCT & Monthly & 1989-10 \\
  \quad Utilities (Util.) & Investing.com & SPLRCU & Monthly & 1989-10 \\
  \bottomrule
  \end{tabular}
  
\begin{tablenotes}[flushleft]
\item[1] Data for U.S. GDP and its individual expenditure components was
retrieved from the St. Louis Federal Reserve economic database (FRED)
on 20 October 2025. The corresponding tickers are identifiers for the
FRED database. Data for the S\&P 500 and its sector sub-indices was
retrieved directly from Investing.com on 20 October 2025 using the
tickers above.
\end{tablenotes}
\end{threeparttable}
\end{adjustbox}
\end{table}

\end{appendix}

\bibliographystyle{unsrtnat}
\bibliography{/home/daniil/Research/bibliography/phd1}
\end{document}